\documentclass[letterpaper,11pt]{article}
\usepackage[margin=1in]{geometry}
\usepackage{graphicx}
\usepackage{xcolor}
\usepackage{amsmath,amssymb}
\usepackage{amsthm}
\usepackage{mathtools}
\usepackage{enumitem}
\usepackage{algpseudocodex}
\usepackage{algorithm}
\usepackage{tikz}
\usepackage{verbatim}
\usepackage{subcaption}
\usepackage{url}
\usepackage{hyperref}

\newtheorem{theorem}{Theorem}
\newtheorem{lemma}[theorem]{Lemma}
\newtheorem{corollary}[theorem]{Corollary}
\theoremstyle{definition}
\newtheorem{definition}[theorem]{Definition}

\newtheorem{fact}[theorem]{Fact}
\newtheorem{remark}[theorem]{Remark}

\setlist[enumerate]{noitemsep, topsep=5pt, partopsep=0pt}
\setlist[itemize]{noitemsep, topsep=5pt, partopsep=0pt}

\title{Straightforward Entropy-Sensitive Mergesort}

\author{
  Bill Jin \\
  \small Department of Computer Science \\
  \small University of Toronto, Canada \\
  \small \texttt{bill.jin@mail.utoronto.ca}
  \and
  Alex Z. Xu \\
  \small Independent Researcher \\
  \small Hong Kong \\
  \small \texttt{xuzaxstk@gmail.com}
}
\date{}

\begin{document}

\begin{titlepage}
    \centering
    \vspace*{4cm}

    {\LARGE \bfseries Straightforward Entropy-Sensitive Mergesort \par}
    
    \vspace{2.5em}

    \begin{minipage}[t]{0.48\textwidth}
        \centering
        \textbf{Bill Jin}\\[0.3em]
        \small Department of Computer Science\\
        University of Toronto, Canada\\
        \texttt{bill.jin@mail.utoronto.ca}
    \end{minipage}
    \begin{minipage}[t]{0.48\textwidth}
        \centering
        \textbf{Alex Z. Xu}\\[0.3em]
        \small Independent Researcher\\
        Hong Kong\\
        \texttt{xuzaxstk@gmail.com}
    \end{minipage}

    \vspace{3em}
    
    \begin{minipage}{0.85\textwidth}
        \small
        \begin{center}\textbf{Abstract}\end{center}
        \vspace{-0.75em}
        In this paper, we present a stable mergesort variant, \textit{directional mergesort}, that to sort an array of $n$ elements makes no more than $nH+3n$ comparisons and $1.5nH+O(n)$ moves where $H$ is the run-based entropy of the input sequence, matching the best existing algorithms in run-adaptive sorting. However, our algorithm is surprisingly minimalistic: it leverages only skip checks, i.e., bypassing the merge step when both halves are already in order, and dynamically changing the direction of merging based on the state of the subarrays. As dynamic run scanning is avoided, all merge steps remain static and derivable from $n$, enabling a reduction to $O(1)$ words of stack space (i.e. space usage excluding the merge buffer) in \textit{directional mergesort\textsuperscript{++}}, thus improving over the predecessors' $O(\lg n)$ words. Importantly, as directional mergesort adapts only to non-decreasing runs, directional mergesort\textsuperscript{++} also applies a parallel set of rules for (strictly) decreasing runs, allowing the algorithm to also adapt to decreasing runs whilst retaining the original comparison bounds.
    \end{minipage}
    
    \vspace{2em}
    {\small \textbf{Keywords:} adaptive sorting, merge sort, space optimization}
    
    \vfill
\end{titlepage}

\section{Introduction}

Sorting is one of the most fundamental problems in computer science, carrying great practical and theoretical importance. For a comparison-based sorting algorithm to operate upon a sequence of $n$ elements, it is well known that the information theoretic lower bound is $\lg (n!) \approx n \lg n -1.4427n$ comparisons. Note that all logarithms throughout this paper are base 2.

However, the classical $\lg(n!)$ lower bound assumes an arbitrary input permutation where all $n!$ outcomes are equally likely. Conversely, real-world datasets frequently exhibit substantial pre-existing order. By exploiting this structure, adaptive sorting algorithms manage to break this generic $\Omega(n \log n)$ barrier, resulting in significant speedups in practical scenarios.

A recent advancement in \cite{MW18} has introduced two   algorithms that optimally adapt to pre-existing sorted \textit{runs} (i.e. maximal contiguous pre-sorted subarrays). They share running times of $O(nH + n)$ and use no more than $nH+3n$ comparisons, defined according to the run-based entropy $H=\sum_{i=1}^{r_\#} (r_i/n) \lg (n/r_i)$, where $r_i$ is the length of the $i$-th sorted run (non-decreasing or decreasing). However, these algorithms and earlier well-known approaches like Timsort (from \cite{Pet02}) rely heavily on run scanning and dynamically formulating run-dependent merge boundaries. This dynamic nature forces state tracking during execution, requiring $O(\lg n)$ words of stack space or sizable overhead to avoid an explicit stack (in \cite{GGS26}). Furthermore, the underlying topology of the \textit{merging tree} model for execution becomes highly volatile due to the reliance on runtime input rather than the static array index partitioning of classical mergesort, making proofs notoriously difficult. For example, the worst-case complexity of Timsort was finally proven in \cite{AJNP18}, 16 years after its initial release.

In this paper, we introduce an algorithm \textit{directional mergesort}, which uses no more than $nH+3n$ comparisons, $1.5nH+5n$ moves and $O(\lg n)$ words of auxiliary space excluding the buffers for merging, matching the comparison bound of Powersort and Peeksort from \cite{MW18} whilst retaining the simplicity of static merge boundaries, but does not adapt to decreasing runs. Additionally, we propose two optimizations: an extension of adaptive capabilities to decreasing runs and a reduction to $O(1)$ words of stack space. The resulting algorithm \textit{directional mergesort\textsuperscript{++}} achieves the final bounds of $nH+3n-r_\#$ comparisons, $1.5nH+6.5n$ moves and $O(1)$ words of auxiliary space excluding the buffers for merging, where $H$ is defined for both decreasing and non-decreasing runs.

\subsection{Technical Overview}
In Section 2, we present the algorithm of directional mergesort, which is a straightforward modification of a standard top-down mergesort. Its adaptivity relies solely on:
\begin{itemize}
    \item The well-known skip check that bypasses the merge when both halves are already in order.
    \item Dynamically altering the merge direction based on the state of the halves.
\end{itemize}
\noindent
In Section 3, we formally prove the claimed bounds of $nH+3n$ comparisons and $1.5nH+O(n)$ moves for the algorithm by analyzing the algorithm's behavior with the merging tree model. Crucially, all adaptation occurs locally within node operations and without altering the underlying merging tree geometry, giving a clean element-charging argument over static tree levels.

In Section 4, we then introduce two optimizations applied in directional mergesort\textsuperscript{++}:
\begin{itemize}
    \item Parallel mechanisms for decreasing runs that enable the algorithm to adapt to decreasing runs as well as non-decreasing runs with zero overhead in comparisons.
    \item An optimization that shrinks the top-down recursive call stack to $O(1)$ words, offering a distinct improvement over predecessors such as Powersort and Peeksort. This optimization also hinges on merge steps being static so they can be computed arithmetically, demonstrating the direct algorithmic dividends of simplicity beyond theoretical clarity.
\end{itemize}

\section{Directional Mergesort}

In this section, we will focus upon the elementary variant of directional mergesort. We derive \textsc{DirectionalMergesort} from standard top-down mergesort by making two simple adjustments: incorporating a standard skip check, and dynamically selecting the merge direction based on subarray pre-sortedness. Notably, these adjustments do not affect the boundaries of recursive calls.

\begin{algorithm}
\caption{Directional Mergesort}
\begin{algorithmic}[1]
\Function{Sort}{$A[], \ell, r$}
    \Comment{Processing Subarray $A[\ell..r)$}
    \If{$r-\ell < 2$}
        \Return{\texttt{true}}
    \EndIf
    \State $m \gets \lfloor (\ell+r)/2 \rfloor$
    \State $a \gets \Call{Sort}{A,\ell,m}$
    \State $b \gets \Call{Sort}{A,m,r}$
    \If{$A[m - 1] \le A[m]$}
        \Return{$a \land b$} \Comment{Skip Check}
    \EndIf
    \If{$b$}
        \Call{MergeBw}{$A,\ell,m,r$} \Comment{Directional Merging}
    \Else
        \ \Call{MergeFw}{$A,\ell,m,r$}
    \EndIf
    \State \Return{\texttt{false}}
\EndFunction
\end{algorithmic}
\end{algorithm}
\noindent
The return value indicates whether the processed subarray is \textit{pure} (i.e., entirely pre-sorted). Hence, base cases return \texttt{true}, whereas any subarray requiring an explicit merge returns \texttt{false}.
\paragraph{Skip Check.} Before executing a merge, the algorithm evaluates $A[m-1] \le A[m]$. If satisfied, the adjacent subarrays are already in sorted order. The algorithm skips the merge step entirely and returns $a \land b$, propagating structural purity up the call tree iff both children were pure.

\paragraph{Directional Merging.}
When an active merge is required, the algorithm inspects the purity flag $b$ of the right child $A[m..r)$ to decide the direction of the merge.
\begin{itemize}
    \item \textbf{Backward Merge ($b = \texttt{true}$):} A purity flag of $b = \texttt{true}$ indicates that $A[m,r)$ is part of a sorted run. The directional merge optimizes for the case when $A[\ell,m)$ contains parts from the same sorted run. Because the continuous run is non-decreasing, all run elements in the right half are greater than or equal to all run elements in the left half. When sorting $A[\ell..m)$, larger non-run elements may move to index $m-1$, failing the skip check ($A[m-1] > A[m]$). By invoking \textsc{MergeBw}, the merge repeatedly selects the larger element (favoring the right half on ties). This ensures the entire right half $A[m,r)$ is consumed before selecting any left-half run elements. Once $A[m,r)$ is exhausted, the remaining elements in $A[\ell,m)$ are simply appended to the end of the merged result using 0 comparisons.
    \item \textbf{Forward Merge ($b = \texttt{false}$):} When $b = \texttt{false}$, the right child is not a pure run. The algorithm defaults to \textsc{MergeFw}, merging conventionally from left to right (smallest to largest). This offers a symmetric optimization: if the left half was pure ($a = \texttt{true}$) the trailing elements of the run in the right half would be appended using 0 comparisons.
\end{itemize}

\begin{figure}[htbp]
    \centering
    \begin{subfigure}[b]{0.48\textwidth}
        \centering
        \resizebox{\linewidth}{!}{
            \begin{tikzpicture}
                \draw[gray, thin] (0,0) rectangle (6,1);
                \draw[gray, thin] (3,0) -- (3,1);
                \draw[blue, thick, line join=bevel] (3,0) -- (6,0) -- (6,1) -- (3,1/4);
                \draw[red, thick, line join=bevel] (3,0) -- (2,0) -- (3,1/4);
            \end{tikzpicture}
        }
        \caption{Backward Merge ($b = \texttt{true}$)}
        \label{fig:merge_fw_savings}
    \end{subfigure}
    \hfill
    \begin{subfigure}[b]{0.48\textwidth}
        \centering
        \resizebox{\linewidth}{!}{
            \begin{tikzpicture}
                \draw[gray, thin] (0,0) rectangle (6,1);
                \draw[gray, thin] (3,0) -- (3,1);
                \draw[blue, thick, line join=bevel] (3,0) -- (0,0) -- (3,3/4);
                \draw[red, thick, line join=bevel] (3,0) -- (4,0) -- (4,1) -- (3,3/4);
            \end{tikzpicture}
        }
        \caption{Forward Merge ($a = \texttt{true}, b = \texttt{false}$)}
        \label{fig:merge_bw_savings}
    \end{subfigure}
    \caption{Subarray configurations exhibiting directional merge savings.}
    \label{fig:directional_savings}
\end{figure}

\section{Complexity Analysis}
In this section we prove the claimed bounds of directional mergesort by analyzing the merging tree.

\begin{definition}[Merging Tree]
    The merging tree serves as a model for the execution of the algorithm, which takes the exact shape of a segment tree over the array of $n$ elements. The root of the tree at level 0 represents the entire array of size $n$. For any internal node $v$ covering the subarray $A[\ell,r)$ at level $i$, its children at level $i+1$ are defined by the algorithm's recursive splitting logic: a left child $L$ covering $A[\ell,m)$ and a right child $R$ covering $A[m,r)$ where $m=\lfloor (\ell+r)/2 \rfloor$. The leaves correspond to singular elements of size $1$, which is the base-case of the recursive algorithm. At every internal node, a merge operation combines the sorted subarrays from its left and right children.
\end{definition}

\begin{fact}
    At any level $h$ in the merging tree, the size of nodes are either $\lfloor n/2^h \rfloor$ or $\lceil n/2^h \rceil$.
\end{fact}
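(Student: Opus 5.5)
We prove the Fact by induction on the level $h$. The induction hypothesis is slightly stronger than the statement: every node at level $h$ has size in $\{\lfloor n/2^h\rfloor, \lceil n/2^h\rceil\}$. The base case $h=0$ is immediate, since the root has size $n = \lfloor n/2^0\rfloor = \lceil n/2^0\rceil$.

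For the inductive step, take a node $v$ at level $h$ covering $A[\ell,r)$, and let $s = r-\ell$. With $m = \lfloor(\ell+r)/2\rfloor$, the left child has size $m-\ell = \lfloor s/2\rfloor$. This holds because $\lfloor(\ell+r)/2\rfloor - \ell = \lfloor (r-\ell)/2\rfloor$, as $\ell$ is an integer. The right child then has size $s-\lfloor s/2\rfloor = \lceil s/2\rceil$. It therefore suffices to show that if $s\in\{\lfloor x\rfloor,\lceil x\rceil\}$ with $x=n/2^h$, then both $\lfloor s/2\rfloor$ and $\lceil s/2\rceil$ lie in $\{\lfloor x/2\rfloor,\lceil x/2\rceil\}$.

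The key tool is the nested-floor identities $\lfloor\lfloor x\rfloor/2\rfloor=\lfloor x/2\rfloor$ and $\lceil\lceil x\rceil/2\rceil=\lceil x/2\rceil$, valid for real $x$. Using them:
\begin{itemize}
\item If $s=\lfloor x\rfloor$, then $\lfloor s/2\rfloor=\lfloor x/2\rfloor$.
\item If $s=\lceil x\rceil$, then $\lceil s/2\rceil=\lceil x/2\rceil$.
\end{itemize}
For the remaining two values, monotonicity gives the bounds. Since $\lfloor x\rfloor \le s \le \lceil x\rceil$,
\[
\lfloor x/2\rfloor=\lfloor\lfloor x\rfloor/2\rfloor\le\lfloor s/2\rfloor\le\lceil s/2\rceil\le\lceil\lceil x\rceil/2\rceil=\lceil x/2\rceil .
\]
Thus every child size is an integer in the closed interval $[\lfloor x/2\rfloor,\lceil x/2\rceil]$. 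That interval contains at most two integers, namely $\lfloor x/2\rfloor$ and $\lceil x/2\rceil$, which completes the induction.

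The only point needing care is the nested-floor identity, and it has a one-line proof. Write $x=2q+f$ with $q$ an integer and $0\le f<2$. Then $\lfloor x\rfloor\in\{2q,2q+1\}$, so $\lfloor \lfloor x\rfloor/2\rfloor = q = \lfloor x/2\rfloor$. The ceiling identity follows by symmetry, applying the floor identity to $-x$.

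One caveat: the statement implicitly assumes level $h$ exists, i.e.\ that nodes at level $h$ are nonempty internal splits. The argument only uses the fact that every node at level $h$ arises by splitting its level-$(h-1)$ parent. Leaves of size $1$ that stop early are therefore covered at their own level and make no claim about deeper levels.
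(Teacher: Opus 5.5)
Your proof is correct and complete. The paper gives no proof of this Fact; it is simply asserted as a standard property of halving recursion. So there is no competing argument to compare against, and your induction supplies the missing justification.

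The steps all check out:
\begin{itemize}
\item $m-\ell=\lfloor (r-\ell)/2\rfloor$ because $\ell$ is an integer.
\item The nested identities $\lfloor\lfloor x\rfloor/2\rfloor=\lfloor x/2\rfloor$ and $\lceil\lceil x\rceil/2\rceil=\lceil x/2\rceil$ hold by your $x=2q+f$ argument and the reflection $x\mapsto -x$.
\item The integers in $[\lfloor x/2\rfloor,\lceil x/2\rceil]$ are just the two endpoints.
\end{itemize}

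There are three cosmetic points.
\begin{itemize}
\item Your hypothesis is not ``slightly stronger'' than the statement but identical to it. It is already inductive, so nothing is lost.
\item The two bulleted special cases are subsumed by the monotonicity chain and can be dropped.
\item The closing caveat is unnecessary. The induction ranges over all nodes at a level, leaves included, and every node at level $h+1$ is a child of an internal node at level $h$, which is exactly what your inductive step treats.
\end{itemize}

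Your route also has a genuine by-product: the explicit split $|L|=\lfloor |v|/2\rfloor$, $|R|=\lceil |v|/2\rceil$. The Fact as stated only implies that siblings differ in size by at most one. Later arguments in the paper also use which sibling is the larger: Subcase~A of Lemma~6, the claim $|R|\ge 2$ in Appendix~A.1, and the boundary reconstruction in \textsc{MovePar}. That orientation comes from the computation you make, not from the Fact itself.
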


\begin{fact}
    At any level $h$ in the merging tree, all node subarrays are contiguous and pairwise disjoint.
\end{fact}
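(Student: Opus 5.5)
The statement is the second fact: at any level $h$ of the merging tree, the node subarrays are contiguous and pairwise disjoint. I would prove the stronger claim that the nodes at level $h$, read left to right, cover consecutive intervals $[\ell_1,r_1),[\ell_2,r_2),\dots,[\ell_k,r_k)$. Here $r_j=\ell_{j+1}$. Their union is a contiguous interval of the original array, namely $[0,n)$ as long as no leaf has stopped early.

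Because the tree can have uneven depth (a leaf of size $1$ may occur at level $h$ while other nodes continue to split), I would phrase the invariant more carefully. The nodes present at level $h$ are half-open intervals $[\ell,r)$ with $r-\ell\ge 1$. Listed in order, each one's right endpoint is at most the next one's left endpoint. This already gives pairwise disjointness. Contiguity of each individual node is immediate from the definition, since every node is an interval $A[\ell,r)$.

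The proof is by induction on $h$.

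\textbf{Base case.} Level $0$ consists of the single interval $[0,n)$.

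\textbf{Inductive step.} Suppose level $h$ is an ordered list of intervals $I_1<I_2<\dots<I_k$, where $I_j<I_{j+1}$ means $r_j\le \ell_{j+1}$.

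\begin{itemize}
\item Each internal node $[\ell,r)$ with $r-\ell\ge 2$ is replaced at level $h+1$ by $[\ell,m)$ and $[m,r)$, where $m=\lfloor(\ell+r)/2\rfloor$.
\item Since $r-\ell\ge 2$, we have $\ell<m<r$. So both children are nonempty, lie inside $[\ell,r)$, and share only the boundary point $m$, which belongs to neither interior. The two children are therefore ordered and disjoint.
\item Leaves contribute nothing to level $h+1$.
\end{itemize}

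Children of $I_j$ lie inside $I_j$, and children of $I_{j'}$ lie inside $I_{j'}$. So the ordering $I_j<I_{j'}$ carries over to all their children. Hence level $h+1$ is again an ordered list of nonempty, pairwise disjoint, contiguous intervals.

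The only point needing care, and it is a mild one, is the inequality $\ell<m<r$ when $r-\ell\ge2$. It follows from $\ell+r\ge 2\ell+2$, which gives $m\ge \ell+1$, and from $\lfloor(\ell+r)/2\rfloor\le (\ell+r)/2<r$. There is no real obstacle beyond this.
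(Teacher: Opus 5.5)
Your induction on levels is correct, and it is the verification the paper leaves implicit: the paper states this as a Fact with no proof, treating it as immediate from the segment-tree splitting rule. The only nit is wording. As index sets, $[\ell,m)$ and $[m,r)$ share no point at all, since $m$ belongs only to the right child, so there is no ``shared boundary point'' to discuss. Your caution about leaves at uneven depth is also well placed: what the paper later relies on (e.g.\ in Lemma 5) is only that each element lies in at most one node per level, which is exactly the disjointness you prove.
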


\subsection{Comparison Cost Analysis}
\noindent
We first isolate the skip check comparisons, which are guaranteed to occur before the merge at any internal node, so that we may focus on analyzing solely the cost of merging subarrays.

\begin{lemma}
    The total cost of skip checks across the entire algorithm is exactly $n-1$ comparisons.
\end{lemma}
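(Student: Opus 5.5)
The plan is to count the skip checks by counting internal nodes of the merging tree. In \textsc{Sort}, the skip check $A[m-1]\le A[m]$ is evaluated exactly once per call with $r-\ell\ge 2$, and never in a base case ($r-\ell<2$). It is evaluated unconditionally, before any merge decision. So the total skip-check cost is exactly the number of internal nodes of the merging tree, one comparison each, and the lemma reduces to showing that this tree has exactly $n-1$ internal nodes.

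I will first check that the merging tree is a full binary tree with exactly $n$ leaves. For an internal node covering $A[\ell,r)$ with $r-\ell\ge 2$, the split point $m=\lfloor(\ell+r)/2\rfloor$ satisfies $\ell<m<r$. Both children are therefore nonempty and strictly smaller than the parent, so recursion terminates and every internal node has exactly two children. Recursion stops only at size-$1$ subarrays, assuming $n\ge1$; the case $n=0$ would be a degenerate exception with no checks. By the Fact on disjointness, the sizes of the children sum to the size of the parent. Hence the leaves partition $A[0,n)$ into singletons, giving exactly $n$ leaves.

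The count then follows by induction on subarray size $s$: a subtree over $s$ elements has exactly $s-1$ internal nodes. For $s=1$ this is $0$. For $s\ge2$, the node splits into sizes $s_1+s_2=s$ with $s_1,s_2\ge1$, so the count is $1+(s_1-1)+(s_2-1)=s-1$. Applying this at the root gives $n-1$ skip checks.

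The only step needing care is the claim that the check is executed exactly once at every internal node, regardless of the outcomes of the children or of the merge direction. This is immediate from the pseudocode, since the check precedes the branching and there is no early return before it. I also need to be careful not to double-count it with the merge comparisons, which are analyzed separately.
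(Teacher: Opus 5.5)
Your proposal is correct and follows the paper's proof: one skip-check comparison per internal node, and a full binary merging tree with $n$ leaves has exactly $n-1$ internal nodes. Your induction on subarray size and the check that $\ell<m<r$ simply spell out the full-binary-tree fact that the paper cites directly.
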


\begin{proof}
    At every internal node, exactly $1$ comparison is used for the skip check. Because the merging tree is a full binary tree, the number of internal nodes is exactly one less than the number of leaves. Thus, a total of $n-1$ skip checks occur, requiring a total of $n-1$ comparisons.
\end{proof}

\noindent
\textbf{Accounting Scheme:} Instead of assigning the cost of a merge to the internal node, we distribute the cost to the individual elements.
For each comparison in a merge loop, the winning element is charged with the $1$ comparison. After exhausting one side, no comparisons are made or charged.
We now consider a pre-existing non-decreasing run of length $r$ at the bottom level of the tree, with the goal of bounding the merge charges applied specifically to the elements of this run. We categorize the internal nodes relative to this run into four cases:

\begin{enumerate}
    \item Fully Contained (Case 1): The node's subarray consists entirely of elements from the run. The merge is guaranteed to be skipped (costing $0$).
    \item Boundary (Case 2): The node's subarray contains run elements as either a prefix or suffix.
    \item Encapsulating (Case 3): The node's subarray completely encapsulates the entire run, containing non-run elements on both the left and right sides.
    \item Outside (Case 4): The node's subarray contains no elements from the run (costing $0$).
\end{enumerate}

\noindent
To bound the total merge charges incurred by a run of length $r$, we partition the levels of the merging tree into two distinct regimes using the threshold depth $t = \lceil \lg(n/r) \rceil$: for a node at level $h$, it is in the upper levels if $h<t$, and it is in the lower levels if $h\geq t$.

\begin{lemma}[Comparisons at Upper Levels]
    For a non-decreasing run of length $r$, the total number of comparisons charged to its elements at levels $h < t$ is bounded by $rt$.
\end{lemma}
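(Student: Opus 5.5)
The plan is a per-level charging argument: I will show that at every fixed level $h$, each run element receives at most one comparison charge. Summing over the $t$ levels $h=0,1,\dots,t-1$ then gives at most $rt$.

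\textbf{Step 1 (one charge per merge).} Fix an internal node $v$ at level $h$ whose merge is actually executed, i.e., the skip check failed. By the accounting scheme, each comparison in the merge loop charges the element that wins it. Every element is output exactly once, and it is charged only if it is output as the winner of a comparison. Hence within the merge at $v$, each element of $v$'s subarray is charged at most $1$. If the merge is skipped, no element is charged. Elements that are appended after one side is exhausted are output without a comparison and so are charged $0$.

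\textbf{Step 2 (one merge per element per level).} By the Fact that nodes at any level $h$ are contiguous and pairwise disjoint, each run element lies in exactly one node at level $h$. It therefore takes part in at most one merge at that level, and by Step~1 it is charged at most $1$ comparison at level $h$. Summing over the $r$ run elements, the total charge to the run at level $h$ is at most $r$.

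\textbf{Step 3 (sum over levels).} Summing the bound of Step~2 over the levels $h=0,\dots,t-1$ gives at most $rt$. This requires no case analysis from the four-case classification. The cases matter only for the lower levels, where one needs savings beyond this crude bound.

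The only subtle point is making sure a single merge never charges the same element twice. This holds because the winner of each comparison is moved to the output immediately, so it cannot take part in, or win, any later comparison within that merge. This is true for both \textsc{MergeFw} and \textsc{MergeBw}, since the charging rule does not depend on the merge direction. I expect this to be a short argument rather than a genuine obstacle.
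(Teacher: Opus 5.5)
Your proposal is correct and follows essentially the same argument as the paper: each run element lies in exactly one node per level (Fact~3), is charged at most one comparison there, and summing over the $t$ upper levels $h=0,\dots,t-1$ gives $rt$. Your Step~1 spells out why a single merge charges each element at most once (the winner of a comparison is output immediately), which the paper leaves implicit, but the route is the same.
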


\begin{proof}
    At any level $h$, the nodes form a partition of the array into disjoint contiguous subarrays (Fact 3). Consequently, each element in the array belongs to exactly one node per level.

    At any upper level $h < t = \lceil \lg(n/r) \rceil$, the node size is $\lfloor n/2^h \rfloor \geq r$ according to Fact~2. Because node size exceeds $r/2$, Case~1 nodes cannot be guaranteed to exist in this regime.
    
    Hence, we assume no savings occur. Each of the $r$ elements in the run is charged $1$ comparison per level, incurring a collective charge of at most $r$ comparisons at each level. Summing across the $t$ upper levels $h = 0, \dots, t-1$, the total merge cost charged to the run is bounded by $rt$.
\end{proof}

\begin{lemma}[Comparisons at Lower Levels]
    For a non-decreasing run of length $r$, the total number of comparisons charged to its elements at levels $h \geq t$ is bounded by $2(n/2^t)$.
\end{lemma}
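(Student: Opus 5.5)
The plan is to exploit the fact that below the threshold $t$ nodes are too small to contain the whole run. Then only two "boundary chains" of nodes can charge the run at all, and in each such node the charged run elements come from a single child. Summing geometrically decreasing child sizes over the levels $h \ge t$ should give $2(n/2^t)$.

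\emph{Step 1 (case structure at lower levels).} Since $t = \lceil \lg(n/r)\rceil$, we have $n/2^t \le r$. By Fact~2, every node at a level $h \ge t$ has size at most $\lceil n/2^h\rceil \le \lceil n/2^t \rceil \le r$, because $r$ is an integer. An encapsulating (Case~3) node needs at least $r+2$ elements, so none exist at these levels. Case~1 and Case~4 nodes charge nothing to the run. Hence only Case~2 nodes matter. By Fact~3, at each level there are at most two of them: the node whose suffix is the run's prefix, and the node whose prefix is the run's suffix.

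\emph{Step 2 (one merge charges the run at most one child's worth).} Each merge outputs every element at most once and charges only the winner, so a run element is charged at most once per node. I will show that in a Case~2 node at level $h$, all charged run elements lie in a single child. This gives a charge of at most $\lceil n/2^{h+1}\rceil$.
\begin{itemize}
\item Take the node whose suffix is part of the run. If its right child is not entirely run, then the left child contains no run elements, and all run elements are in the right child. If the right child is entirely run, it is pure, so $b=\texttt{true}$ and \textsc{MergeBw} is used. The run elements of the left child are $\le$ every element of the right child, and ties favour the right. So the right half is exhausted before any left run element wins, and those left run elements are appended with no comparisons. Only right-child run elements are charged.
\item The symmetric case, where the run is a prefix of the node, uses \textsc{MergeFw} when $a=\texttt{true}$ and $b=\texttt{false}$. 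The subcase where $b$ is nevertheless \texttt{true} needs separate care. There I would argue directly that in a backward merge the winners drawn from the left child are the only run elements that can be charged, again confining the charges to one child.
\end{itemize}

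\emph{Step 3 (summation).} Summing over the at most two boundary nodes per level gives
\[
\sum_{h\ge t} 2\cdot \frac{n}{2^{h+1}} \;=\; 2\cdot\frac{n}{2^t}.
\]

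The main obstacle I expect is twofold. The first is the ceilings: a crude bound $\lceil n/2^{h+1}\rceil$ per level would add an extra $O(\lg n)$ term. I would try to avoid it by charging instead at most the number of run elements in the relevant child. Along a boundary chain these counts are bounded by the actual child sizes, and the real node sizes, rather than their upper bounds, sum telescopically to at most the top node's size $\le n/2^t$. The second is the subcase in Step~2 where a prefix-boundary node's right child happens to be pure, which forces a backward merge. I would handle it by showing that the charged run elements are exactly those in the left child, so the one-child bound still holds.
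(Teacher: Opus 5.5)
Your Steps 1 and 2 match the paper's proof: no encapsulating nodes below level $t$, at most two boundary chains, and the directional merge confining each boundary node's charge to one child. The gap is in the summation, which is where the lemma's exact constant is decided. As you note, the displayed sum in Step 3 assumes a per-level charge of $n/2^{h+1}$, but Step 2 only gives $\lceil n/2^{h+1}\rceil$. On the left chain with the run's edge in $L$, the charge is exactly $|R|=\lceil n_i/2\rceil$. Your repair, as sketched, does not close this, for two reasons.
\begin{itemize}
\item Suppose the run's left edge lies strictly inside the right child. Then the charged child is $R=v_{i+1}$ itself. Bounding the count by its actual size gives $\lceil n_i/2\rceil$, which exceeds $n_i-n_{i+1}=\lfloor n_i/2\rfloor$ by one whenever $n_i$ is odd. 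So the sum does not telescope, and the surpluses accumulate over the levels. For example, a chain of sizes $2^j+1\to 2^{j-1}+1\to\dots\to 3\to 2$ yields a bound of $2^j+j$, which exceeds $n_t=2^j+1$ for $j\ge 2$.
\item Even a clean telescoping to the top node's size is not enough. The top node has $n_t\le\lceil n/2^t\rceil$, which need not be $\le n/2^t$ (take $n=5$, $t=1$, and a node of size $3$).
\end{itemize}

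The missing ingredient is the paper's per-level inequality $c_i\le n_i-n_{i+1}$: bound each level's charge by the size of the child the chain does \emph{not} enter. In the case where the edge lies strictly inside $R$ on the left chain, this uses the fact that $R$ contains at least one non-run element. Hence $c_i\le |R|-1=\lceil n_i/2\rceil-1\le\lfloor n_i/2\rfloor$. In every other case the inequality is immediate; in particular, on the right chain the continuing child $L$ is never larger than its sibling. Telescoping then gives $n_t-n_{k+1}\le\lceil n/2^t\rceil-1\le n/2^t$. It is precisely the $-1$ coming from $n_{k+1}\ge 1$ that absorbs the ceiling.

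Separately, the subcase where $b$ is ``nevertheless \texttt{true}'' is not an obstacle and needs no backward-merge analysis. If the run's right end lies strictly inside $R$, then $R$ contains the last run element followed by a strictly smaller one, because runs are maximal; so $b=\texttt{false}$. If the run's end coincides with the split, $R$ holds no run elements, and at most $|L|=n'_i-n'_{i+1}$ is charged whichever merge direction is used.
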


\begin{proof}
    For levels $h \geq t$ we have that $h \geq \lceil \lg(n/r) \rceil \geq \lg (n/r)$ which gives $n/2^h \leq r$. However, $r$ is an integer so $\lceil n/2^h \rceil \leq r$, bounding the node sizes on these levels at $r$ elements. Thus, Case~3 (Encapsulating) nodes, which require at least $r + 2$ elements, cannot exist on these levels. Therefore, any node relative to the run must either be fully contained within it (Case~1), fall completely outside of it (Case~4), or intersect exactly one of the run's edges (Case~2). The only such nodes with costs relevant to the run are Case 2 (Boundary) nodes. Because the run spans a contiguous interval, there are at most two Case 2 nodes at any level: one containing the left boundary of the run, and one containing the right boundary.

    \paragraph{Left Boundary Analysis:} Let $v_i$ be the Case 2 node on the left boundary at level $i \ge t$, with size $n_i = \vert{}v_i\vert{}$. Because $v_i$ contains the left boundary of the run, the run forms a suffix of $v_i$. Node $v_i$ splits into children $L$ and $R$ where $|L|=\lfloor n_i/2 \rfloor$ and $|R|=\lceil n_i/2 \rceil$. We trace the path of left boundary nodes down the tree:

    \begin{enumerate}
        \item \textbf{Subcase A (Boundary falls in $R$):} The left child $L$ lies entirely outside the run, and the run elements in $v_i$ fall within $R$. We explicitly define the next boundary node in our path as $v_{i+1} = R$. The maximum comparisons charged is $1$ for each of the $s_i$ run elements in $v_i$. Since the run boundary falls inside $R$, $s_i \le \vert{}R\vert{} - 1$. Given $\vert{}R\vert{} = \lceil n_i / 2 \rceil$, it follows that:
        $$s_i \le \lceil n_i/2 \rceil -1 \leq \lfloor n_i / 2 \rfloor = \vert{}L\vert{} = n_i - \vert{}R\vert{} = n_i - n_{i+1}$$
        \item \textbf{Subcase B (Boundary falls in $L$):} The right child $R$ is entirely contained within the run, making $R$ a pure sub-run. We explicitly define the next boundary node in our path as $v_{i+1} = L$. Because $R$ is pure, its recursive sort returns \texttt{true}, setting $b \gets \texttt{true}$. Furthermore, because $L$ contains a boundary between two runs, it is not pure, and its recursive sort returns \texttt{false}, setting $a \gets \texttt{false}$. The directional mergesort thus selects \textsc{mergeBw}, which compares elements from largest to smallest (selecting from $R$ upon ties). Because all elements in $R$ and $L \cap \text{Run}$ belong to the same sorted run, every element in $R$ is $\ge$ all run elements in $L$. Consequently, the algorithm exhausts $R$ inside the main comparison loop (costing $\vert{}R\vert{}$ comparisons) before selecting any elements from $L \cap \text{Run}$. The remaining run elements in $L$ are merged direct using $0$ comparisons. Thus, the charged cost is exactly $\vert{}R\vert{}$. Since $n_i = \vert{}L\vert{} + \vert{}R\vert{}$ and $v_{i+1} = L$, the cost is:
        $$\vert{}R\vert{} = n_i - \vert{}L\vert{} = n_i - n_{i+1}$$
        \item \textbf{Subcase C (Boundary aligns perfectly):} If the boundary aligns perfectly with the partition between $L$ and $R$ at some level $k$, $R$ is a pure sub-run (Case~1) and $L$ contains zero run elements (Case~4). Node $v_k$ is therefore the final Case~2 node in the boundary path. As in Subcase B, $R$ is pure, triggering \textsc{mergeBw} which exhausts the elements in $R$ at a cost of $\vert{}R\vert{}$ comparisons (using worst-case charge of $1$ comparison per element). To cleanly terminate the telescoping sum, we define the final term $n_{k+1} = \vert{}L\vert{}$. The cost at this terminal level $k$ is:
        $$\vert{}R\vert{} = n_k - \vert{}L\vert{} = n_k - n_{k+1}$$
    \end{enumerate}

    \noindent
    In all cases, the comparison cost charged to the run at any level $i$ along the left boundary path is bounded by $n_i - n_{i+1}$. Summing these costs from the top of the lower-level regime (level $t$) down to the termination level $k$ yields a telescoping sum. Because $n_{k+1} \ge 1$ for any node in the tree, and $n_t\leq \lceil n/2^t \rceil$ according to Fact 2, the total left boundary cost is bounded by:
    $$\sum_{i=t}^{k} (n_i - n_{i+1}) = n_t - n_{k+1} \le n_t - 1 \leq \lceil n/2^t \rceil -1 \leq n/2^t$$

    \paragraph{Right Boundary Analysis:} By symmetry, the right boundary yields an identical bound. Because the run forms a prefix of any Case~2 node $v'_i$ (with left child $L$ and right child $R$) on the right boundary, the algorithm mirrors the left boundary logic. When the boundary falls in the left child, applying the worst-case charge of $1$ comparison per element gives $s'_i \leq n'_i-n'_{i+1}$ comparisons. If it falls into the right child, the left child is pure giving $a = \texttt{true}$ whilst the right child is impure as it contains a boundary between runs which are maximal, giving $b=\texttt{false}$. The directional mergesort then utilizes \textsc{mergeFw} to efficiently exhaust the pure left child first, ensuring the comparison cost charged at level $i$ is bounded by $n'_i - n'_{i+1}$. At the right termination level $k'$, the boundary perfectly aligns with the partition between the children $L$ and $R$, meaning $L$ is pure and $R$ contains no run elements. Charging $1$ comparison to every element in $L$, the cost at this terminal level is $|L|=n'_{k'}-|R|=n'_{k'}-n'_{k'+1}$ (defining $n'_{k'+1}=|R|$ to terminate the telescoping sum). Summing this telescoping sum across the lower levels yields the exact same overall bound of $n/2^t$ for the right boundary. Thus, the total cost charged at the lower levels are:
    \begin{equation*} n/2^t+n/2^t=2(n/2^t) \end{equation*}
    If on one (or both) ends no boundary nodes exist, then the cost is trivially $0\leq n/2^t$, hence the upper bound of $2(n/2^t)$ remains valid.
\end{proof}

\begin{lemma}
    For a non-decreasing run of length $r$, the total number of merge comparisons charged to its elements across all levels of the merging tree is bounded by $r \lg(n/r) + 2r$.
\end{lemma}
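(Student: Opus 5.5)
We combine the two preceding lemmas. Every level is either an upper level ($h<t$) or a lower level ($h\ge t$) with respect to $t=\lceil \lg(n/r)\rceil$, so the total merge charge on the run is at most
\[
rt + 2\,\frac{n}{2^t}.
\]
This expression depends only on $t$. What remains is to show $rt + 2n/2^t \le r\lg(n/r) + 2r$.

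We write $t = \lg(n/r) + \delta$, where $\delta = \lceil \lg(n/r)\rceil - \lg(n/r) \in [0,1)$. Then $2^t = (n/r)\,2^\delta$, so $n/2^t = r\,2^{-\delta}$, and the bound becomes
\[
r\lg(n/r) + r\left(\delta + 2\cdot 2^{-\delta}\right).
\]
It therefore suffices to prove $f(\delta) = \delta + 2^{1-\delta} \le 2$ for all $\delta\in[0,1]$.

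The function $f$ is convex because $2^{1-\delta}$ is convex. A convex function on an interval attains its maximum at an endpoint. We have $f(0) = 0 + 2 = 2$ and $f(1) = 1 + 1 = 2$, so $f(\delta)\le 2$ on $[0,1]$. Multiplying by $r$ gives exactly the additive $2r$ term.

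This last inequality is the only real step, and it is elementary. The one point to check is the degenerate case $r = n$, where $t = 0$. There are then no upper levels, and the lower-level bound is $2n = 2r$, which is consistent with the formula.
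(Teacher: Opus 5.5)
Your proposal is correct and follows the paper's proof essentially verbatim. You add the upper-level bound $rt$ to the lower-level bound $2(n/2^t)$, substitute $t=\lg(n/r)+\gamma$, and bound $\gamma+2^{1-\gamma}\le 2$ by convexity; your endpoint check $f(0)=f(1)=2$ is the paper's secant-line inequality $2^{-\gamma}\le 1-\gamma/2$ stated differently.
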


\begin{proof}
    By partitioning the tree at threshold depth $t = \lceil \lg(n/r) \rceil$, we sum the upper-level costs (Lemma 5) and lower-level costs (Lemma 6). The total comparison charge is bounded by:
    $$r \lceil \lg(n/r) \rceil + 2(n/2^t)$$
    To simplify this, let $\gamma = \lceil \lg(n/r) \rceil - \lg(n/r)$ represent the fractional ceiling overhead, where $0 \le \gamma < 1$. We can rewrite the sum in terms of $\gamma$:
    $$r(\lg(n/r) + \gamma) + 2\left(n/2^{\lg(n/r)+\gamma}\right)$$
    \begin{equation}= r \lg(n/r) + r(\gamma+2^{1-\gamma})\end{equation}

    \noindent
    As $2^{-\gamma}$ is convex, we can apply the secant line bound to $2^{-\gamma}$ for $\gamma \in [0, 1]$:
    \begin{align*}
    2^{-\gamma} &\leq 1-(1/2)\gamma \\[0.25em]
    2(2^{-\gamma}) &\leq 2-\gamma \\[0.25em]
    2^{1-\gamma}+\gamma &\leq 2
    \end{align*}

    \noindent
    Substituting this upper limit into (1) yields the final bound:
    \begin{equation*} r \lg (n/r)+2r \qedhere \end{equation*}
\end{proof}

\begin{theorem}[Total Comparison Complexity]
    An array of size $n$ containing $r_\#$ non-decreasing runs of lengths $r_1, r_2, \dots, r_{r_\#}$ is sorted in at most $nH + 3n$ comparisons, where the run-based entropy $H = \sum_{i=1}^{r_\#} (r_i/n) \lg \left(n/r_i\right)$.
\end{theorem}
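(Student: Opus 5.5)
The plan is to split the total comparison count into skip-check comparisons and merge comparisons, and to bound the merge comparisons run by run using the accounting scheme. By Lemma~4, the skip checks cost exactly $n-1$ comparisons. Every remaining comparison is made inside a merge loop, and under the accounting scheme it is charged to exactly one element, namely the winner of that comparison. Each element of the array lies in exactly one maximal non-decreasing run, and the runs $r_1,\dots,r_{r_\#}$ partition the array. So the total number of merge comparisons equals the sum over $i$ of the charges received by the elements of the $i$-th run.

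Next I apply Lemma~7 to each run separately. For the $i$-th run this gives a charge of at most $r_i\lg(n/r_i)+2r_i$. Summing over all runs,
\begin{equation*}
\sum_{i=1}^{r_\#}\bigl(r_i\lg(n/r_i)+2r_i\bigr) = n\sum_{i=1}^{r_\#}(r_i/n)\lg(n/r_i) + 2\sum_{i=1}^{r_\#} r_i = nH+2n,
\end{equation*}
using $\sum_i r_i = n$. Adding the $n-1$ skip-check comparisons gives at most $nH+3n-1\le nH+3n$.

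The step that needs care is justifying that Lemma~7 applies to every run independently, with no double counting. Lemmas~5 and~6 bound only the charges landing on the elements of the given run, not the cost of whole merges. Since charges are attached to single elements, and the runs are disjoint, the per-run bounds simply add. I would also check that the boundary-case reasoning in Lemma~6 relied only on runs being maximal: the impurity of the child containing a boundary is guaranteed because the neighbouring run makes that child fail to be sorted. This holds simultaneously for every run, so the lemma is valid for each run without change.

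Finally, I would handle the degenerate cases. If $n=1$, there are no comparisons at all. If $r_\#=1$, then $H=0$ and every merge is skipped, so only the $n-1$ skip checks occur, and the bound $nH+3n$ holds trivially.
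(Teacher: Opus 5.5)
Your proposal is correct and follows the paper's proof essentially verbatim. You take the $n-1$ skip checks from Lemma~4, sum the per-run merge charge $r_i\lg(n/r_i)+2r_i$ from Lemma~7 over the disjoint runs using $\sum_i r_i = n$ to get $nH+2n$, and add the two to reach $nH+3n-1\le nH+3n$; your extra remarks on the additivity of per-element charges and on the degenerate cases are sound but go beyond what the paper needed to state.
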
 

\begin{proof}
The total number of comparisons from merging is derived by summing the per run cost (Lemma 7) across all $r_\#$ sorted runs (with lengths $r_1, r_2, \dots, r_{r_\#}$ where $\sum r_i = n$):
$$\left( \sum_{i=1}^{r_\#} r_i \lg (n/r_i) +2r_i \right)=\left( \sum_{i=1}^{r_\#} r_i \lg (n/r_i)\right)+2n=n\left( \sum_{i=1}^{r_\#} (r_i/n) \lg (n/r_i)\right)+2n=nH+2n$$
\noindent
Finally, we reintroduce the cost of skip checks from Lemma 4:
\begin{equation*} nH+2n+(n-1) = nH+3n-1 \leq nH+3n \qedhere \end{equation*}
\end{proof}

\subsection{Data Movement Analysis}

To analyze the data movement complexity, we apply a similar structural framework based on the merging tree established in Section 3.1.

The algorithm utilizes a half-buffered merge. In an internal node $v$, the algorithm first copies either the left child $L$ or right child $R$ into the buffer, moving at most $\max(|L|, |R|) \le \lceil |v|/2 \rceil$ elements by Fact 2. It then uses at most $m$ moves to form the merged result in the original array. For $|v| \in \mathbb{Z}$, total moves are bounded by:
    \begin{equation*} |v| + \lceil |v|/2 \rceil \leq |v|+|v|/2+1/2 = 1.5|v| + 1/2 \end{equation*}

\noindent
Because the merging tree contains exactly $n-1$ internal nodes (see Lemma 4), the $+1/2$ overhead term incurred by each merge at every internal node contributes an absolute maximum of $(n-1)/2$ moves across the entire execution. We globally separate this $O(n)$ overhead, leaving an aggregate move pool of exactly $1.5|v|$ for each internal node $v$. 

\paragraph{Amortized Accounting Method:} We distribute the $1.5|v|$ move pool evenly among the $|v|$ elements in $v$, charging exactly $1.5$ moves per element. The key difference from the approach for comparisons is that this $1.5$ charge is a uniform mathematical abstraction over the node's capacity rather than a physical count of whether an element was copied to the buffer or written as a residual (whereas Section 3.1 uses the actual cost of selecting the element in the merge). Hence, local deductions for uncopied residuals cannot be applied to this charge. However, if a merge is skipped entirely in Case 1, then $0$ moves are made, hence the per element cost is also $0$. Similarly, Case 4 nodes contain no elements from the run, thus the charge to the run is also $0$.

Equipped with this charging rate, we bound the aggregate move cost for a run of length $r$, utilizing the same two regime partitioning established in Section~3.1. However, to tighten the move complexity bound, we shift the regime threshold to $t+1$.

\begin{lemma}[Moves at Upper Levels]
\label{lemma:moves_upper}
    For a non-decreasing run of length $r$, the total move charge incurred at upper levels ($h < t+1$) is bounded by $1.5r(t+1)$.
\end{lemma}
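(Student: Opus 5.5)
The plan mirrors the proof of Lemma~5, with the comparison charge replaced by the uniform move charge from the amortized accounting method. First I use Fact~3. At every level $h$ the nodes partition the array into disjoint contiguous subarrays, so each of the $r$ run elements belongs to exactly one node per level. Under the accounting scheme, an element inside an internal node $v$ is charged exactly $1.5$ moves if the merge at $v$ is performed. It is charged $0$ if the merge is skipped (as always happens in Case~1), and Case~4 nodes contribute nothing to the run. So the charge to a single run element at a single level is at most $1.5$ in every case.

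Next I sum over the upper regime. The levels with $h < t+1$ are $h = 0, 1, \dots, t$, which is exactly $t+1$ levels. At each such level the $r$ run elements are collectively charged at most $1.5r$. I make no attempt to exploit savings from skipped merges here. This is justified because node sizes at these levels can be as large as $\lfloor n/2^h\rfloor \geq r/2$ (by Fact~2, since $h \le t$ gives $n/2^h \ge n/2^{t} > r/2$), so Case~1 nodes are not guaranteed. Summing gives at most $1.5r(t+1)$.

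There is one care point. Elements sitting in leaves, or in levels below the tree's depth, incur no merge and hence no charge, so including those levels only weakens the bound and is harmless. Similarly, the globally separated $+1/2$ overhead per internal node is not part of the per-element charge, so it does not enter this lemma. The main obstacle is merely conceptual: confirming that the $1.5$ per element rate is a valid upper bound on each element's share of its node's $1.5|v|$ pool in every case. That follows directly from the even distribution of the pool, so the argument is essentially a counting of levels.
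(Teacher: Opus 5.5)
Your proof is correct and follows the paper's own argument. Each run element lies in exactly one node per level (Fact~3) and is charged at most $1.5$ moves there, so summing over the $t+1$ levels $h=0,\dots,t$ gives $1.5r(t+1)$. One harmless slip in your side remark: $\lfloor n/2^h\rfloor \ge r/2$ can fail for odd $r$ (e.g.\ $n=7$, $r=3$, $h=t=2$ gives $\lfloor 7/4\rfloor = 1$), but $n/2^h > r/2$ does hold, and the remark plays no role in the bound.
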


\begin{proof}
    Paralleling Lemma 5, each of the $r$ elements in the run belongs to exactly one node per level and is charged $1.5$ moves. Across all $t+1$ levels, the total move charge is $1.5r(t+1)$.
\end{proof}

\begin{lemma}[Moves at Lower Levels]
\label{lemma:moves_lower}
    For a non-decreasing run of length $r$, the total move charge incurred at lower levels ($h \ge t+1$) is bounded by $3(n/2^t)$.
\end{lemma}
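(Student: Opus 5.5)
We follow the structure of Lemma 6, but replace the actual per-element comparison charge with the uniform charge of $1.5$ moves per element of a non-skipped node. For levels $h \ge t+1$ we have $n/2^h \le r/2$, so every node has size at most $\lceil n/2^{t+1}\rceil \le r$. Hence Case~3 (encapsulating) nodes cannot occur. As in Lemma 6, the only nodes charging the run are Case~2 boundary nodes, and there is at most one on each boundary per level. Case~1 merges are skipped and Case~4 nodes contain no run elements, so both are charged $0$.

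\textbf{Left boundary.} Let $v_i$ be the left boundary node at level $i \ge t+1$, with $n_i=|v_i|$ and $s_i \le n_i$ run elements. Its charge to the run is at most $1.5 s_i$. Unlike comparisons, directional merging gives no savings here, because the charge is uniform over the node. So we bound $s_i$ by the run portion alone and use the geometric decay of node sizes instead of a telescoping sum.

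The key observation is structural. Along the left boundary path, the run elements of $v_i$ form a suffix of $v_i$. When the boundary falls into $L$ (Subcase~B), $R$ is pure. The run elements of $v_{i+1}=L$ then form a suffix of $L$, and the next level is charged only on those elements. Since $s_i \le n_i \le \lceil n/2^i\rceil$, summing over $i \ge t+1$ gives
\[
1.5\sum_{i\ge t+1} \lceil n/2^i\rceil .
\]
The ceilings add $O(1)$ per level, which we must avoid. To control them we bound $s_i \le n_i - 1$ for every non-terminal boundary node, since such a node contains at least one non-run element. We also use the fact that sizes along a root-to-leaf path satisfy $n_{i+1}\le \lceil n_i/2\rceil$. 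Starting from $n_{t+1}\le\lceil n/2^{t+1}\rceil$, the sum $\sum_i (n_i-1)$ along a halving chain is at most $2(n_{t+1}-1)$. This can be checked by induction: $n-1+2(\lceil n/2\rceil-1)\le 2(n-1)$ holds because $2\lceil n/2\rceil\le n+1$. The terminal aligned level (Subcase~C) charges $|R|$ without the $-1$. Either absorb it into the same induction, noting $|R|\le n_k-1$ since $L$ is non-empty, or charge it separately.

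Combining these, the left boundary costs at most $1.5\cdot 2(\lceil n/2^{t+1}\rceil - 1) \le 1.5\cdot 2\cdot n/2^{t+1} = 1.5\, n/2^t$. The right boundary is symmetric and gives the same $1.5\,n/2^t$. The total over the lower levels is therefore $3(n/2^t)$. If a boundary node is absent on either side, that side contributes $0$.

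The main obstacle is the ceiling bookkeeping in the chain sum. The inequality $\sum (n_i-1)\le 2(n_{t+1}-1)$ must hold for both the ceiling and floor children, and the terminal level must fit the same inequality. I would first try a direct induction on the path length with hypothesis ``charge from a boundary node of size $m$ downward is at most $1.5\cdot 2(m-1)$.'' It should hold because each child on the path has size at most $\lceil m/2\rceil$, and the node's own run elements number at most $m-1$.
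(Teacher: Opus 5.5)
Your proof is correct and reaches the same bound, but you handle the ceilings by a longer route than the paper does. The paper never sums raw ceilings. It observes that a Case~2 node at level $h$ contains at least one non-run element, so it holds at most $\lceil n/2^h\rceil - 1 \le n/2^h$ run elements. The $-1$ absorbs the ceiling level by level, which gives a per-level charge of at most $2\cdot 1.5\cdot n/2^h = 3(n/2^h)$ and a plain geometric series $\sum_{h\ge t+1} 3(n/2^h) \le 3(n/2^t)$.

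You instead use Fact~2 only at the top of the chain. You then carry the $-1$ down the path through $\sum_i (n_i-1)\le 2(n_{t+1}-1)$, proved by induction from $n_{i+1}\le\lceil n_i/2\rceil$. This is valid: monotonicity in the next size handles the floor child as well. It is also slightly more self-contained, since it needs only the parent--child halving relation rather than absolute level sizes. It ends at the same constant, though, and the ceiling obstacle you anticipated was already removed by the same $s_i\le n_i-1$ you invoke.

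Two minor points. First, $s_i\le n_i-1$ holds for every Case~2 node, terminal or not, so your Subcase~C caveat is moot. Second, your discussion of Subcase~B and the purity of $R$ plays no role here, because the uniform $1.5$ move charge is oblivious to merge direction.
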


\begin{proof}
    As previously established (in the proof of Lemma 6), at levels $h \ge t+1> t$, Case~1 and Case~4 nodes execute no moves (relevant to the run), and Case~3 nodes cannot exist. Thus, the moves at lower levels are purely from at most two Case~2 boundary nodes at each level.
    
    By Fact 2, the maximum size of any node at level $h$ is $\lceil n/2^h \rceil$.  Consequently, the maximum number of run elements contained in a boundary node at level $h$ is at most $\lceil n/2^h \rceil-1\leq n/2^h$. Across both boundary nodes at level $h$, the total number of run elements charged is at most $2(n/2^h)$. Applying the uniform accounting charge of $1.5$ moves per element, the total charge at level $h$ is bounded by $3(n/2^h)$. Summing this geometric series for all lower levels:
    \begin{equation*} \sum_{h=t+1}^{\lg n} 3(n/2^h) \leq 3 \left( \frac{n/2^{t+1}}{1 - 1/2} \right) = 6(n/2^{t+1})=3(n/2^t) \qedhere \end{equation*}
\end{proof}

\begin{theorem}[Total Move Complexity]
\label{theorem:total_moves}
    An array of size $n$ containing $r_\#$ pre-existing sorted runs is sorted in at most $1.5nH + 5n$ moves, where $H = \sum_{i=1}^{r_\#} (r_i/n) \lg \left(n/r_i\right)$ is the run-based entropy.
\end{theorem}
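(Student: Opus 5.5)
The plan mirrors Theorem 8: bound the move charge run by run using Lemmas~\ref{lemma:moves_upper} and~\ref{lemma:moves_lower}, sum over all runs, and then add back the globally separated $(n-1)/2$ overhead.

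\textbf{Per-run bound.} For a run of length $r$ with $t=\lceil \lg(n/r)\rceil$, the two lemmas give a total charge of at most
\[ 1.5r(t+1)+3(n/2^t). \]
As in Lemma 7, write $t=\lg(n/r)+\gamma$ with $0\le\gamma<1$. Then $n/2^t=r\,2^{-\gamma}$, and the bound becomes
\[ 1.5r\lg(n/r)+r\bigl(1.5\gamma+1.5+3\cdot 2^{-\gamma}\bigr). \]
To bound the bracket, use the secant inequality $2^{-\gamma}\le 1-\gamma/2$ from the proof of Lemma 7. This gives $3\cdot2^{-\gamma}\le 3-1.5\gamma$, so the $\gamma$ terms cancel and the bracket is at most $4.5$. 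The per-run charge is therefore at most $1.5r\lg(n/r)+4.5r$.

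\textbf{Summing.} Every element belongs to exactly one run, and each element's charge at each node was assigned to that element's run. So every unit of the $1.5|v|$ pools is accounted for exactly once across runs. Summing over the runs, with $\sum r_i=n$, gives
\[ 1.5\sum_i r_i\lg(n/r_i)+4.5n = 1.5nH+4.5n. \]
Adding the $(n-1)/2$ rounding overhead from the half-buffered merge bound yields
\[ 1.5nH+4.5n+(n-1)/2 \le 1.5nH+5n. \]

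\textbf{Main obstacle.} The delicate step is the constant bookkeeping. We must check that the $\gamma$-dependence cancels under the shifted threshold $t+1$, rather than leaving something like $1.5\gamma$ uncancelled. We must also confirm that the lower-level geometric sum in Lemma~\ref{lemma:moves_lower} really gives $3\cdot 2^{-\gamma}r$ and not more.

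A secondary point is to confirm that the per-element uniform charge really covers all moves. Skipped merges (Case 1) cost nothing, so pools are only charged at non-skipped nodes. Each non-skipped node's actual move count is at most $1.5|v|+1/2$, which is fully split between the element pools and the separated overhead. With the cancellation above, no extra slack is required.
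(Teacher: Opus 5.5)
Your proposal is correct and follows the paper's proof essentially step for step. You combine the upper- and lower-level move lemmas into $1.5r(t+1)+3(n/2^t)$, substitute $t=\lg(n/r)+\gamma$, and use the secant bound $2^{-\gamma}\le 1-\gamma/2$ so the $\gamma$ terms cancel to give $1.5r\lg(n/r)+4.5r$; you then sum over runs and add back the $(n-1)/2$ overhead to reach $1.5nH+5n$.
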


\begin{proof}
    Using the threshold $t+1 = \lceil \lg(n/r) \rceil+1 = \lg(n/r) + \gamma+1$, we combine the upper (Lemma 9) and lower (Lemma 10) level costs for a single run of length $r$:
    \begin{equation} 1.5r(t+1)+3(n/2^t) = 1.5 r \lg(n/r)+r\left(1.5+1.5\gamma+3\left(2^{-\gamma}\right)\right) \end{equation}
    \noindent
    Rearranging the secant line bound on $2^{-\gamma}$ for $\gamma \in[0,1]$:
    \begin{align*}
        2^{-\gamma} &\leq 1-(1/2)\gamma \\[0.25em]
        3(2^{-\gamma}) &\leq 3-1.5\gamma \\[0.25em]
        3(2^{-\gamma})+1.5\gamma &\leq 3
    \end{align*}

    \noindent
    Substituting into (2), the moves made to process a run of length $r$ is given by:
    $$1.5 r \lg(n/r) + r\left(1.5+1.5\gamma+3\left(2^{-\gamma}\right)\right) \leq 1.5r \lg (n/r) + 4.5r$$
    
    \noindent
    Summing this bound across all $r_\#$ sorted runs (with lengths $r_1, r_2, \dots, r_{r_\#}$ where $\sum r_i = n$):
    $$\sum_{i=1}^{r_\#} \left( 1.5 r_i \lg(n/r_i) + 4.5 r_i \right) = 1.5n \left( \sum_{i=1}^{r_\#} (r_i/n) \lg(n/r_i) \right) + 4.5n = 1.5nH + 4.5n$$
    
    \noindent
    Lastly, we reintegrate the $(n-1)/2$ overhead moves isolated from the initial merge cost:
    \begin{equation*} 1.5nH + 4.5n + (n-1)/2 \leq 1.5nH + 5n \qedhere \end{equation*}
\end{proof}

\subsection{Summary}
\begin{corollary}[Overall Complexity]
    Directional mergesort sorts an array of size $n$ with run-based entropy $H$ using no more than $nH + 3n$ comparisons and $1.5nH + 5n$ element moves. Excluding the merge buffer of size $\lceil n/2 \rceil$, the algorithm uses $O(\lg n)$ auxiliary words.
\end{corollary}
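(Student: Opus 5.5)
The corollary has three parts, and I will handle them in that order. The comparison bound is Theorem~8 restated. The move bound is Theorem~11 restated. The only new content is the space claim, and that is where most of the argument goes.

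For the buffer, consider any internal node $v$. The half-buffered merge copies either $L$ or $R$ into the buffer, so at most $\max(|L|,|R|) = \lceil |v|/2 \rceil$ elements are held at once. Subarray sizes shrink going down the tree, so the root needs the most. One shared buffer of size $\lceil n/2 \rceil$ can therefore be allocated once and reused by every merge. This reuse is safe because merges are strictly sequential: each merge finishes before its parent's merge starts.

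For the auxiliary words, I would look at the recursion of \textsc{Sort}. Each call keeps only $O(1)$ words in its frame:
\begin{itemize}
\item the arguments $\ell$ and $r$;
\item the midpoint $m$;
\item the two Boolean purity flags $a$ and $b$;
\item a return address.
\end{itemize}
\textsc{MergeFw} and \textsc{MergeBw} are iterative and need only $O(1)$ index variables beyond the buffer.

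The number of simultaneously live frames equals the recursion depth. By Fact~2, nodes at level $h$ have size at most $\lceil n/2^h \rceil$. Since $\lceil\lceil x\rceil/2\rceil=\lceil x/2\rceil$, sizes along any root-to-leaf path are bounded this way at every step. Hence the size reaches $1$ by level $\lceil \lg n \rceil$, so the tree has depth $\lceil \lg n \rceil$. The total is $O(1)\cdot(\lceil \lg n\rceil+1)=O(\lg n)$ words.

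The main obstacle is small but real. I need to argue that the purity flags and the skip-check comparison add no hidden per-level state beyond $O(1)$ words. I also need to confirm that the move and comparison accounting in Theorems~8 and~11 was done for exactly this single shared buffer. I would check the second point by noting that the move analysis charged each merge $|v|+\lceil|v|/2\rceil$, which is precisely a copy of at most half the node into the buffer followed by one write-back per element.
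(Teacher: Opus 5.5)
Your proposal is correct and follows the same route as the paper. The paper also reads the comparison and move bounds off Theorems~8 and~11, and it gets the $O(\lg n)$ auxiliary words from the recursion stack, whose depth is at most the merging tree's height $\lceil \lg n\rceil$. Your treatment of the shared $\lceil n/2\rceil$ buffer and your depth argument via Fact~2 only spell out details the paper leaves implicit.
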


\begin{proof}
    Comparison and move bounds follow immediately from Theorem 8 and Theorem 11. The $O(\lg n)$ auxiliary words accounts for the recursive call stack, which is bounded by the merging tree's height of $\lceil \lg n \rceil$ due to the balanced binary splitting logic of the algorithm.
\end{proof}

\begin{remark}[Scope of Run Entropy]
\label{rem:non_decreasing_runs}
    The run-based entropy $H = \sum (r_i/n) \lg(n/r_i)$ in Corollary 12 is defined exclusively over non-decreasing pre-existing runs. If the input contains strictly decreasing runs, they are treated as individual runs of length $1$.
\end{remark}

\section{Optimized Variants}
Now that the complexity bounds of directional mergesort are established, we proceed to consider two optimizations that result in directional mergesort\textsuperscript{++}.

\subsection{Adapting to Decreasing Runs}

In the base algorithm, the returned Boolean flag tracks if the subarray was purely non-decreasing (\texttt{true} iff so) according to the success of skip checks. However, as skip checks only verify if the two halves are already in non-decreasing order, decreasing runs are treated as individual runs of length $1$. Thus, we extend the return value to two flags $x_1$ and $x_2$, where $x_1$ is the standard non-decreasing flag and $x_2$ tracks if the subarray was purely strictly decreasing (\texttt{true} iff so). In practice, these flags can be represented with a 2-bit unsigned integer bitmask for simpler implementation.

\begin{algorithm}
\caption{Directional Mergesort\textsuperscript{+1}}
\begin{algorithmic}[1]
\Function{Sort}{$A[], \ell, r$}
    \Comment{Processing Subarray $A[\ell..r)$}
    \If{$r-\ell < 2$}
        \Return{\texttt{(true, true)}}
    \EndIf
    \If{$r-\ell = 2$}
        \If{$A[\ell] \le A[\ell+1]$}
            \Return{\texttt{(true, false)}}
        \Else
            \ \Return{\texttt{(false, true)}}
        \EndIf
    \EndIf
    \State $m \gets \lfloor (\ell+r)/2 \rfloor$
    \State $(a_1,a_2) \gets \Call{Sort}{A,\ell,m}$
    \State $(b_1,b_2) \gets \Call{Sort}{A,m,r}$
    \If{$a_2 \land b_2$}
        \If{$A[m - 1] > A[m]$}
            \Return{\texttt{(false, true)}}
            \Comment{Decreasing Skip Check}
        \EndIf
    \EndIf
    \If{$a_2$} \Call{Reverse}{$A,\ell,m$}
        \Comment{Reversing Decreasing Halves}
        \EndIf 
    \If{$b_2$} \Call{Reverse}{$A,m,r$}
        \EndIf
    \If{$\text{\textbf{not}} (a_2 \land b_2)$}
        \If{$A[m - 1] \le A[m]$}
            \Return{$\texttt{($a_1 \land b_1$, false)}$}
            \Comment{Standard Skip Check}
        \EndIf
    \EndIf
    \If{$b_1 \lor \lnot( b_2 \lor a_1) $}
        \Call{MergeBw}{$A,\ell,m,r$}
        \Comment{Directional Merging}
    \Else
        \ \Call{MergeFw}{$A,\ell,m,r$}
    \EndIf
    \State \Return{\texttt{(false, false)}}
\EndFunction
\Procedure{DirectionalMergesort\textsuperscript{+1}}{$A[],n$}
    \If{\Call{Sort}{$A,0,n$} $=$ \texttt{(false, true)}}
        \Call{Reverse}{$A,0,n$}
    \EndIf
\EndProcedure
\end{algorithmic}
\end{algorithm}

\noindent
This variant has two separate base cases. When the subarray has $1$ element, it is vacuously both non-decreasing and strictly decreasing, so the procedure returns both as \texttt{true}. When the subarray size is 2, one comparison suffices to determine the relationship between the two elements, allowing us to concretely decide if the subarray is non-decreasing or strictly decreasing.

To avoid unnecessary data movement and allow for the maximal extension of decreasing runs, physical reversals are deferred: a purely decreasing subarray is reversed only when its parent subarray is not purely decreasing where the decreasing flag is dropped. To handle the edge case where the entire array is in strictly decreasing order and not corrected during any recursive calls, the outer procedure \textsc{DirectionalMergesort\textsuperscript{+1}} intercepts this return value and performs a single top-level reversal to reorient the array in sorted order. Note that stability is maintained by the reversals as they only operate on strictly decreasing subarrays that cannot contain equal elements, ensuring the relative order of equal elements are not inverted.

\paragraph{Skip Check.} To adapt to decreasing sequences, the algorithm introduces a decreasing skip check. If both halves are strictly decreasing, the check evaluates if the two decreasing child subarrays seamlessly connect into a larger strictly decreasing run, skipping the merge step and propagating this information up the tree. This ensures that any Case 1 nodes fully contained within a decreasing run bypass the merge, costing zero comparisons. If any of the halves is not decreasing, the algorithm utilizes the standard skip check ($A[m-1] \le A[m]$) to evaluate non-decreasing purity instead. Because the flag conditions for executing the two skip check are mutually exclusive, at most $1$ comparison is performed, hence the global overhead for skip checks remains at $n-1$ comparisons.

\paragraph{Directional Merging.} The directional merging logic follows from the elementary variant. However, \textsc{MergeBw} is also invoked when neither the right half is purely decreasing ($b_2=\texttt{false}$) nor the left half is purely non-decreasing ($a_1=\texttt{true}$), utilizing the backwards merge when there are no potential forward merge savings. This leads to additional savings for decreasing runs:

\begin{itemize}
    \item \textbf{Backward Merge:} When the left half is fully contained by a decreasing run that extends into the right half, all run elements in the right half are strictly less than the elements in the left half. In this situation, $a_2=\texttt{true}$, $b_2=\texttt{false}$ (as it contains a boundary), and $a_2=\texttt{true}$ implies $a_1=\texttt{false}$ when $|L|>1$ ($|L|=1$ handled in Appendix A.1), triggering \textsc{MergeBw}. The backwards merge then exhausts the left half before selecting any run elements from the right half. These remaining run elements are directly appended using $0$ comparisons.
    
    \item \textbf{Forward Merge:} Otherwise, when the right half is a decreasing sub-run $b_2=\texttt{true}$ which implies $b_1=\texttt{false}$ as $|R|>1$ (when $r-\ell\geq3$, which holds for all non base cases), \textsc{MergeFw} is called. If the run begins in the left half, this gives a symmetrical saving where the right half is exhausted first, and the run elements in the left half are appended with no comparisons.
\end{itemize} 

\begin{figure}[htbp]
    \centering
    \begin{subfigure}[b]{0.48\textwidth}
        \centering
        \resizebox{\linewidth}{!}{
            \begin{tikzpicture}
                \draw[gray, thin] (0,0) rectangle (6,1);
                \draw[gray, thin] (3,0) -- (3,1);
                \draw[blue, thick, line join=bevel] (3,0) -- (0,0) -- (0,1) -- (3,1/4);
                \draw[red, thick, line join=bevel] (3,0) -- (4,0) -- (3,1/4);
            \end{tikzpicture}
        }
        \caption{Backward Merge ($a_1 = \texttt{false},b_2=\texttt{false}$)}
        \label{fig:merge_bw_savings_rev}
    \end{subfigure}
    \hfill
    \begin{subfigure}[b]{0.48\textwidth}
        \centering
        \resizebox{\linewidth}{!}{
            \begin{tikzpicture}
                \draw[gray, thin] (0,0) rectangle (6,1);
                \draw[gray, thin] (3,0) -- (3,1);
                \draw[blue, thick, line join=bevel] (3,0) -- (6,0) -- (3,3/4);
                \draw[red, thick, line join=bevel] (3,0) -- (2,0) -- (2,1) -- (3,3/4);
            \end{tikzpicture}
        }
        \caption{Forward Merge ($b1 = \texttt{false}, b_2 = \texttt{true}$)}
        \label{fig:merge_fw_savings_rev}
    \end{subfigure}
    \caption{Subarray configurations exhibiting directional merge savings on decreasing runs.}
    \label{fig:directional_savings_rev}
\end{figure}

\noindent
With the mechanics of directional mergesort\textsuperscript{+1} established, we now prove that it achieves the same optimal comparison bounds for both non-decreasing and decreasing sequences.

\begin{lemma}[Merge Comparisons for Non-Decreasing Runs]
    For a non-decreasing run of length $r$, the total number of merge comparisons charged to its elements across all levels of the merging tree in directional mergesort\textsuperscript{+1} is bounded by $r \lg(n/r) + 2r$.
\end{lemma}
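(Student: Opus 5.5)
We reuse the structure of Lemmas 5--7 and verify that each ingredient survives the changes made in directional mergesort\textsuperscript{+1}. The upper-level bound of Lemma 5 charges every run element one comparison per level with no savings assumed. This uses only the merging tree geometry, which is unchanged, and the charging rule, which is unchanged. It therefore carries over verbatim and gives at most $rt$.

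One new feature must be handled: the size-2 base case now spends a comparison. That comparison plays the role of the skip check at a node of size 2, so we count it among the $n-1$ skip-check comparisons rather than as a merge charge. The calculation of Lemma 7 then applies unchanged once the lower-level bound $2(n/2^t)$ is re-established.

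For the lower levels ($h\ge t$) we again have only Case 1, Case 2 and Case 4 nodes. For a Case 1 node contained in a maximal non-decreasing run of length at least 2, the elements are non-decreasing, so $x_2$ is false at every sub-node of size at least 2, since no strictly decreasing pair exists. Consequently $a_2\land b_2$ fails, or holds only for sizes $\le 1$; the standard skip check fires and the node returns $x_1=\texttt{true}$ with zero merge cost. One point needs care: a size-1 child returns $(\texttt{true},\texttt{true})$. We must show that no reversal or decreasing skip check disturbs the non-decreasing run. This holds because reversing a single element is a no-op, and the decreasing check requires both children to have $x_2$ true, which inside a non-decreasing run happens only for two size-1 children, a node of size 2 already handled by the base case.

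The main work is the Case 2 boundary path. On the left boundary, Subcase A is a worst-case count and transfers unchanged. In Subcases B and C, $R$ lies inside the run, so $b_1=\texttt{true}$, and the condition $b_1\lor\lnot(b_2\lor a_1)$ selects \textsc{MergeBw}, exactly as before. Exhausting $R$ then costs at most $|R|=n_i-n_{i+1}$. The one new concern is that the run elements in $L$ might have been physically reversed. This cannot happen: reversal occurs only for a child flagged $x_2$, and $L$ contains a run boundary together with run elements. The exception is when $L$ is strictly decreasing, which can only occur if the run contributes a single element to $L$. I expect this to be the main obstacle. In that case the single element is trivially still the smallest run element, so exhausting $R$ first still leaves it to be appended at no cost. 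The same holds when $|R|=1$, since then $b_2$ is also true but $b_1$ still holds.

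On the right boundary, $L$ is pure non-decreasing, so $a_1=\texttt{true}$. If $|L|\ge 2$ then $a_2=\texttt{false}$. For \textsc{MergeFw} to be chosen we need $b_1=\texttt{false}$ and not ($b_2=\texttt{false}$ with $a_1=\texttt{false}$); since $a_1$ is true, \textsc{MergeFw} is selected whenever $b_1$ is false. We must rule out $b_1=\texttt{true}$ for the impure $R$: $R$ contains the boundary to the next non-decreasing run and is therefore not non-decreasing, except in the size-1 degenerate case, where it costs at most 1, which is $\le n'_i-n'_{i+1}$. If $R$ is a decreasing block that is reversed first, \textsc{MergeFw} still exhausts the pure $L$ at cost at most $|L|$.

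With both telescoping sums restored, we obtain the bound $2(n/2^t)$ at the lower levels. Adding the $rt$ bound from the upper levels and applying the $\gamma$ secant argument of Lemma 7 gives $r\lg(n/r)+2r$.
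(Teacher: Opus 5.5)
Your proposal is correct and follows essentially the same route as the paper. You reuse the level-by-level accounting of Lemmas 5--7, observe that inside a non-decreasing run every sub-node of size at least $2$ has $x_2=\texttt{false}$ so Case~1 nodes still pass the standard skip check, and check (as the paper does in Appendix~A.1) that the backward condition $b_1\lor\lnot(b_2\lor a_1)$ still selects \textsc{MergeBw} on the left boundary and \textsc{MergeFw} on the right. The reversal issue you flag as the main obstacle is harmless, and for a simpler reason than the one you give: at merge time both halves are sorted and hold their original elements, and the merge charges depend only on those values, so how $L$ was rearranged internally does not matter.
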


\begin{proof}
    Within any purely non-decreasing run, $A[i] \le A[i+1]$ holds for all
    pairs of elements from the run. Hence, any size 2 base case fully contained
    by the run identifies the non-decreasing order and returns $x_2 = 0$,
    resulting in unchanged skip check and directional merging savings upwards the merging tree (full details in Appendix A.1).

    For the two elements at the ends of the run, they may be attributed as decreasing together with adjacent non-run elements if they are in Case 2 nodes. Since such a node contains non-run elements, $x_1 = 0$ is correctly identified and the directional merge remains valid. Moreover, the worst-case accounting of Lemma 5 and Lemma 6 already assumes merges over mixed regions are never fortuitously skipped (i.e. elements from different runs happening to be in the correct order resulting in a skip), so the altered execution of the skip check does not affect the bound. Hence, the bound in Lemma 7 is retained.
\end{proof}

\begin{lemma}[Comparisons at Upper Levels: Decreasing]
    For a decreasing run of length $r$, the total number of comparisons charged to its elements at levels $h < t$ is bounded by $rt$.
\end{lemma}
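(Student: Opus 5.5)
The argument mirrors the proof of Lemma 5 almost verbatim, because the upper-level bound never used any property of the run except the charging scheme.

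First, I set up the accounting. As in Section 3.1, each comparison made inside a merge loop of \textsc{MergeBw} or \textsc{MergeFw} is charged to the winning element. Comparisons spent on skip checks are excluded from this charge; this includes the size-2 base case comparison, which is played by a single skip check at the corresponding internal node. The decreasing skip check and the standard skip check are mutually exclusive, so these comparisons are already covered by the global $n-1$ of Lemma 4. The deferred \textsc{Reverse} calls perform moves but no comparisons, so they contribute nothing to the comparison charge.

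Second, I count charges per level. By Fact 3, at every level $h$ the nodes partition the array into disjoint contiguous subarrays, so each of the $r$ run elements lies in exactly one node per level. Within a single merge at that node, an element is charged at most once. Once it wins a comparison it is written out and never compared again within that merge, regardless of merge direction and regardless of whether a child was reversed beforehand. Hence each run element receives at most one charge per level, and the run as a whole receives at most $r$ charges per level.

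Third, I sum over the upper levels. At these levels, $h<t=\lceil\lg(n/r)\rceil$, Fact 2 gives node sizes at least $\lfloor n/2^h\rfloor\ge r$. So, exactly as in Lemma 5, I do not claim any savings from Case 1 nodes or from the decreasing skip check, and I use the worst case of $r$ charges per level. Summing over the $t$ levels $h=0,\dots,t-1$ yields the bound $rt$.

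The only step needing care is the "at most once per level" claim, given the new preprocessing: reversals, and the base case of size 2. I would check that reversal only permutes a child's contents before the merge, so the merge still takes two sorted sequences and charges each winner once. I would also check that the merge tree levels are unchanged, since the size-2 base case coincides with the internal node whose two children are leaves.
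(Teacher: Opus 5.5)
Your proposal is correct and takes the paper's approach: the paper's proof reuses the Lemma 5 accounting, charging each of the $r$ run elements at most one comparison per level over the levels $h=0,\dots,t-1$, which gives $rt$. Your extra checks are sound and spell out what the paper leaves implicit, namely that reversals make no comparisons, that the size-2 base-case comparison is counted among the $n-1$ skip checks, and that a merge in either direction charges each element at most once.
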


\begin{proof}
    By the accounting scheme in Lemma 5, in the worst-case scenario, each of the $r$ elements in the run is charged the worst-case of $1$ comparison per level. Across all $t$ levels ($h=0, \dots, t-1$), the total charge to the run is at most $rt$.
\end{proof}

\begin{lemma}[Comparisons at Lower Levels: Decreasing]
    For a decreasing run of length $r$, the total number of comparisons charged to its elements at levels $h \geq t$ is bounded by $2(n/2^t)$.
\end{lemma}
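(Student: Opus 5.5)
The plan mirrors the proof of Lemma 6, with the roles of the two merge directions swapped according to the flag logic of directional mergesort\textsuperscript{+1}. First, the structural facts carry over unchanged. For $h \ge t$ every node has size at most $\lceil n/2^h\rceil \le r$, so no Case~3 node exists. Case~1 nodes inside the decreasing run are resolved by the decreasing skip check at zero merge cost, and Case~4 nodes are irrelevant. So at each lower level only the at most two Case~2 nodes matter: one holding the left end of the run, one holding the right end. I will trace each boundary path $v_t, v_{t+1},\dots,v_k$ and show that the charge to run elements at $v_i$ is at most $n_i - n_{i+1}$. Telescoping then gives at most $\lceil n/2^t\rceil - 1 \le n/2^t$ per side, hence $2(n/2^t)$ in total.

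\textbf{Left boundary, where the run is a suffix of $v_i$.}
\begin{itemize}
\item If the boundary falls in $R$, then $L$ is outside the run. I charge at most one comparison per run element, which is at most $\lceil n_i/2\rceil - 1 \le |L| = n_i - n_{i+1}$, with $v_{i+1}=R$.
\item If the boundary falls in $L$ (or aligns exactly with the split), then $R$ is a pure strictly decreasing sub-run, so $b_2=\texttt{true}$. Also $|R|\ge 2$ for non-base nodes, so $b_1=\texttt{false}$; and $a_2$ is false in the interior case. Hence the $R$ half is reversed, the skip check fails (it fails genuinely, or is skipped-charged in the worst case as in Lemma 13), and the condition $b_1 \lor \lnot(b_2\lor a_1)$ is false. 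So \textsc{MergeFw} runs.
\item After $R$ is reversed it is non-decreasing, and all its elements are strictly smaller than the run elements in $L$, which occupy the tail of sorted $L$. The forward merge therefore exhausts $R$ within the comparison loop, at a cost of at most $|R| = n_i - n_{i+1}$ charged comparisons. The run elements of $L$ are then appended for free.
\end{itemize}

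\textbf{Right boundary, where the run is a prefix of $v'_i$.} Symmetrically, if the boundary falls in $L$, I use the worst-case charge $s'_i \le n'_i - n'_{i+1}$. If it falls in $R$ or aligns, then $L$ is a pure decreasing run, so $a_2=\texttt{true}$ and $a_1=\texttt{false}$, while $b_2=\texttt{false}$. The condition $\lnot(b_2\lor a_1)$ is then true and \textsc{MergeBw} is used. After reversal, $L$ is sorted with all elements strictly greater than the run elements of $R$, which sit at the front of sorted $R$. The backward merge exhausts $L$ first at cost $|L| = n'_i-n'_{i+1}$, and the remaining run elements are placed for free. At the terminal level I define $n_{k+1}$ and $n'_{k'+1}$ as the non-run child, exactly as in Lemma 6.

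\textbf{Main obstacle: degenerate cases where the flags do not behave as claimed.} There are two of them.
\begin{itemize}
\item \emph{Singleton pure child.} If the pure child has size $1$, both of its flags are true. This occurs only when $|v_i|\le 3$, and in particular affects the right-boundary case with $|L|=1$. Here I will argue directly that a single element costs at most $1 = n_i - n_{i+1}$ charged comparison regardless of merge direction, which deals with the issue deferred to Appendix A.1.
\item \emph{End element tagged differently.} A run end element might be tagged together with a neighbouring non-run element, for example as a size-2 base case. Since base cases are leaves and carry no merge charge, it suffices to check that such a child is still impure for the relevant flag. The boundary child is not a pure strictly decreasing run, because the run is maximal; so $a_2$ or $b_2$ is false where required.
\end{itemize}

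If some node has no boundary on one side, that side contributes $0 \le n/2^t$.
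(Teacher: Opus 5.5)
Your plan follows the paper's proof. You exclude Case~1, 3 and 4 nodes at levels $h\ge t$ and charge at most $n_i-n_{i+1}$ per level along each boundary path. When the boundary lies in the child away from the run you use the worst-case charge. On the left boundary \textsc{MergeFw} runs because $b_2=\texttt{true}$ and $b_1=\texttt{false}$; on the right boundary \textsc{MergeBw} runs because $a_2=\texttt{true}$, $a_1=\texttt{false}$ and $b_2=\texttt{false}$. You then telescope to $\lceil n/2^t\rceil-1\le n/2^t$ per side. The one step that fails as written is your resolution of the singleton case, which you identify as the main obstacle.

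In that case $v'_i$ has size $3$, with $L=[k_1]$, $R=[k_2,k_3]$, and the run ending at $k_2$. So the run has \emph{two} elements in $v'_i$, not one. Your claim that the charge is at most $1=n'_i-n'_{i+1}$ ``regardless of merge direction'' is false. Under \textsc{MergeFw}, $k_2<k_1$ wins the first comparison, and if $k_1\le k_3$ then $k_1$ wins the second, so the run is charged $2$.

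The repair is to show the direction is forced. Since the run is maximal, $k_2\le k_3$, so the size-2 base case for $R$ returns $(\texttt{true},\texttt{false})$. Hence $b_1=\texttt{true}$ and the backward condition $b_1\lor\lnot(b_2\lor a_1)$ holds. \textsc{MergeBw} then selects $k_1$, possibly after first selecting $k_3$, and the run is charged exactly $1$.

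Alternatively, use the cross-run budget argument of Appendix~A.1. This merge uses at most $2$ comparisons, and the worst-case charges already allotted at this node sum to at least $2$: one for the run containing $k_1k_2$ and one for the run starting at $k_3$.

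There is a smaller imprecision of the same kind in the aligned right-boundary case. There $b_2=\texttt{false}$ need not hold, because the next run may itself be strictly decreasing and fill $R$, so \textsc{MergeBw} is not guaranteed. This is harmless: the run elements in that node are exactly $L$, so the worst-case charge $|L|$ suffices, which is what the paper uses.
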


\begin{proof}
    As established in the proof of Lemma 6, at levels $h\geq t$, nodes are too small to fully encapsulate the run, thus Case 3 nodes cannot exist, and nodes completely outside the run (Case 4) also cost $0$ comparisons relevant to the run. Any node fully contained within the decreasing run (Case 1) will pass the decreasing skip check, bypassing the merge step, costing $0$ comparisons. Therefore, the only nodes that incur costs relevant to the decreasing run are the two Boundary nodes (Case 2) at each end of the run's contiguous interval.

    \paragraph{Left Boundary:} As defined in Lemma 6, let $v_i$ be the Case 2 node on the left boundary at level $i \ge t$, and $n_i=|v_i|$. 

    \begin{itemize}
        \item If the boundary falls in the right child $R$, applying the worst-case charge bounds the cost by $|R|-1\leq|L|=n_i - n_{i+1}$ as in Lemma 6.
        \item If the boundary falls in the left child $L$, $R$ is a pure strictly decreasing sub-run ($b_2 = \texttt{true}$) and thus cannot be non-decreasing ($b_1 = \texttt{false}$) ($|R|\geq2$ for all non-base cases).
        The algorithm utilizes \textsc{MergeFw}, exhausting the pure right half first and appending the remaining left-half run elements with $0$ comparisons. The charged cost is exactly $\vert R \vert = n_i - n_{i+1}$. 
        \item If at some level $k$ boundary perfectly aligns with the partition between $L$ and $R$, the boundary path terminates. Charging the worst-case of $1$ comparison to every run in $R$ gives $|R|=n_k-n_{k+1}$ comparisons (where $n_{k+1}=|L|$).
    \end{itemize}

    \noindent
    As the cost at any left boundary node $v_i$ is bounded by $n_i-n_{i+1}$ universally for all subcases, summing over all levels yields the telescoping sum $\sum (n_i - n_{i+1}) \le n/2^t$ from Lemma 6.

    \paragraph{Right Boundary:} By symmetry, let $v'_i$ be the right boundary node and $n'_i=|v'_i|$.

    \begin{itemize}
        \item If the boundary falls in $L$, the worst-case charge applies, identically yielding $n'_i - n'_{i+1}$.
        \item If the boundary falls in the right child $R$, the left child $L$ is a pure strictly decreasing sub-run ($a_2 = \texttt{true}$) and $R$ cannot be decreasing ($b_2 = \texttt{false}$). The algorithm invokes \textsc{MergeBw} (except when $|L|=1$; Appendix A.1 shows this does not affect the bound), exhausting the pure left half first at a cost of exactly $\vert{}L\vert{} = n'_i - n'_{i+1}$ comparisons.
        \item If the boundary aligns perfectly with the partition between $L$ and $R$ at level $k$, the boundary path terminates, and the cost is $n'_k-n'_{k+1}$ comparisons (with $n'_{k+1}=|L|$).
    \end{itemize}

    \noindent
    The same telescoping sum $\sum(n'_i - n'_{i+1}) \le n/2^t$ holds for the right boundary. Adding the costs for the left and right boundaries yields the total lower-level cost bound of $2(n/2^t)$ comparisons.
\end{proof}

\begin{theorem}[Optimized Comparison Complexity]
    An array of size $n$ containing $r_\#$ pre-existing non-decreasing or decreasing runs of lengths $r_1, r_2, \dots, r_{r_\#}$ is sorted in at most $nH + 3n$ comparisons, where the run-based entropy $H = \sum_{i=1}^{r_\#} (r_i/n) \lg \left(n/r_i\right)$.
\end{theorem}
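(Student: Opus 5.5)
The plan is to mirror the proof of Theorem 8, with Lemmas 13--15 taking the place of Lemmas 5--7. First, decompose the input into its $r_\#$ maximal runs, each of which is either non-decreasing or strictly decreasing. These runs partition the array, so $\sum r_i = n$, and every merge comparison is charged, under the accounting scheme of Section 3.1, to exactly one element and hence to exactly one run.

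For a non-decreasing run of length $r$, Lemma 13 gives a merge charge of at most $r\lg(n/r)+2r$. For a decreasing run, set $t=\lceil\lg(n/r)\rceil$ and add Lemma 14 (upper levels, at most $rt$) to Lemma 15 (lower levels, at most $2(n/2^t)$). This gives $r\lceil\lg(n/r)\rceil + 2(n/2^t)$. I then repeat the $\gamma$-substitution and secant-line bound $2^{1-\gamma}+\gamma\le 2$ from the proof of Lemma 7, which reduces this to $r\lg(n/r)+2r$. So every run, of either orientation, contributes at most $r_i\lg(n/r_i)+2r_i$. Summing over all runs gives $nH+2n$.

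Next come the comparisons made outside the merge loops, which must be accounted for separately.

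\emph{Skip checks.} As argued in the description of Algorithm 2, the decreasing and standard skip checks are guarded by the mutually exclusive conditions $a_2\land b_2$ and $\lnot(a_2\land b_2)$. So each internal node of size at least $3$ performs at most one skip-check comparison.

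\emph{Size-2 base cases.} Each such leaf pair uses one comparison in place of the (absent) skip check at what would have been its internal node. Since the merging tree still has exactly $n-1$ internal nodes, counting size-2 nodes as the parents of two leaves, the total of skip checks plus base-case comparisons is at most $n-1$, exactly as in Lemma 4.

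\emph{Reversals.} The reversals and the final top-level reversal use no comparisons.

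Adding these gives $nH+2n+(n-1)\le nH+3n$.

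The step I expect to be the main obstacle is making sure the per-run charging is sound in the presence of the new mechanisms. Three points need checking:
\begin{itemize}
\item A size-2 base case that straddles two runs, such as the end of one run and a neighbouring element, must not break the boundary-path telescoping in Lemmas 13 and 15.
\item Deferred reversals must not change which elements win comparisons in the charged merges.
\item The $|L|=1$ exception must not break these arguments either.
\end{itemize}
These points are deferred in Lemmas 13 and 15 to Appendix A.1. The theorem assumes those lemmas as stated, so here I would only need to confirm that runs are taken to be maximal, which ensures boundary nodes return impure flags. With that confirmed, the sum over runs is legitimate and the bound follows.
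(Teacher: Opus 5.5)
Your proposal is correct and follows the paper's proof. Both arguments bound every run, of either orientation, by $r_i\lg(n/r_i)+2r_i$; the decreasing case adds the upper- and lower-level lemmas and reuses the $\gamma$/secant step from Lemma 7. Both then sum to $nH+2n$ and add the $n-1$ mutually exclusive skip-check comparisons, and your explicit accounting of the size-2 base-case comparison just spells out what the paper folds into that $n-1$ term.

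One cosmetic point: Remark 13 shares the theorem counter, so the lemmas you cite as 13, 14 and 15 are Lemmas 14, 15 and 16 in the paper.
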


\begin{proof}
    Combining the bounds from Lemma 15 and Lemma 16, and setting $t = \lceil \lg(n/r_i) \rceil$, the total merge comparisons charged to any decreasing run of length $r_i$ is at most $r_i \lg(n/r_i) + 2r_i$ (see proof of Lemma 7). This perfectly matches the bound for non-decreasing runs in Lemma 14. 
    
    Following the identical aggregation logic as Theorem 8, summing these charges across all $r_\#$ runs yields $\sum_{i=1}^{r_\#} \left( r_i \lg(n/r_i) + 2r_i \right) = nH + 2n$. Adding the global overhead of exactly $n-1$ comparisons for the mutually exclusive skip checks yields the final bound of $nH + 3n - 1$ comparisons.
\end{proof}

\noindent
To compute the move bound, we first separate the cost of reversals from the cost of merges.

\begin{lemma}[Cost of Reversals]
    In total, reversing decreasing subarrays makes at most $1.5n$ moves.
\end{lemma}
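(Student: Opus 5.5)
Reversing a subarray of length $s$ by pairwise swaps costs $\lfloor s/2\rfloor$ swaps, i.e.\ $3\lfloor s/2\rfloor \le 1.5s$ moves when each swap uses a temporary (three moves). The plan is therefore to show that the subarrays reversed over the whole execution are pairwise disjoint. Their total length is then at most $n$, and the reversal cost is at most $1.5n$.

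\textbf{Identifying the reversed subarrays.} A call \textsc{Reverse}$(A,\ell,m)$ or \textsc{Reverse}$(A,m,r)$ is issued only for a child whose returned flag $x_2$ is \texttt{true}. We also need that the parent $v$ does not itself return $x_2=\texttt{true}$.
\begin{itemize}
\item The only path returning $x_2=\texttt{true}$ at an internal node of size $\ge 3$ is the decreasing skip check. That return precedes the reversal lines.
\item The size-2 base cases return directly and never reverse.
\end{itemize}
So a node $u$ is reversed at most once, and only if $u$ returned $x_2=\texttt{true}$ while its parent returned $x_2=\texttt{false}$. The outer procedure adds one more reversal of the whole array, only in the case \textsc{Sort}$(A,0,n)=(\texttt{false},\texttt{true})$. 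We treat that as the root being reversed, with a virtual parent whose flag is \texttt{false}.

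\textbf{Disjointness.} The nodes with $x_2=\texttt{true}$ are closed under taking children. If a node returns $x_2=\texttt{true}$ via the decreasing skip check, then $a_2\land b_2$ held, so both its children have $x_2=\texttt{true}$. The base cases hold trivially. Hence the set of $x_2$-true nodes is a union of subtrees, and the reversed nodes are exactly the maximal elements (roots) of these subtrees. Take two distinct reversed nodes. If one were an ancestor of the other, every node on the path between them would have $x_2=\texttt{true}$ by the closure property. Then the lower node's parent would have $x_2=\texttt{true}$, a contradiction. So no reversed node is an ancestor of another. In a segment tree, two nodes neither of which is an ancestor of the other cover disjoint intervals (Fact 3 together with the nesting structure). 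Therefore the reversed intervals are pairwise disjoint and their lengths sum to at most $n$.

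\textbf{Counting.} Summing $3\lfloor s/2\rfloor\le 1.5s$ over the disjoint reversed intervals gives at most $1.5n$ moves. Intervals of length 1 cost nothing.

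\textbf{Main obstacle.} The step needing the most care is confirming that no path in Algorithm 2 returns $x_2=\texttt{true}$ after having reversed a child. A reversal is always followed either by a return with $x_2=\texttt{false}$ or by a merge, so this should hold. It is also what guarantees that a reversed region is never reversed again at a higher level.

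A second point to settle is whether the paper counts a swap as 3 moves or 2. If the intended cost model is element writes in a swap without a temporary, the bound is even smaller, so using 3 moves per swap is the safe choice that yields exactly $1.5n$.
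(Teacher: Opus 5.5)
Your proof is correct and takes the same route as the paper. Both arguments rest on the same two facts: any frame that reaches the \textsc{Reverse} lines returns $x_2=\texttt{false}$, so each element lies in at most one reversed subarray (reversed intervals are disjoint), and reversing $s$ elements costs at most $1.5s$ moves. Your explicit downward-closure argument (the decreasing skip check requires $a_2\land b_2$) fills in the ``consequently'' step that the paper leaves implicit.
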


\begin{proof}
    According to the deferral logic in Algorithm 2, a subarray is physically reversed iff a recursive call identifies it as purely decreasing ($x_2 = \texttt{true}$). Crucially, if the reversal logic is processed, the current execution frame inevitably returns $x_2 = \texttt{false}$ to its parent. Consequently, each individual element can be involved in at most one reversal operation across the entire execution. 

    Let $m$ denote the number of elements involved in a given reversal. The cost of reversing $m$ elements is at most $1.5m$ moves, yielding an amortized bound of $1.5$ moves per element per reversal. Because every element is subject to at most one reversal, we can charge a maximal cost of $1.5$ moves to each of the $n$ elements, resulting in the bound of $1.5n$ total moves.
\end{proof}

\begin{lemma}[Optimized Moves per Run]
    For a non-decreasing or decreasing run of length $r$, the total move charge incurred is bounded by $1.5 r \lg (n/r)+4.5r$.
\end{lemma}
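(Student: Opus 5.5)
The plan is to mirror the proof of Theorem~11 for directional mergesort\textsuperscript{+1}, using the same uniform accounting. Every non-skipped merge at an internal node $v$ gets a pool of $1.5|v|$ moves, distributed as $1.5$ per element, with the global $+1/2$ overhead per node set aside as before. Reversals are not charged here; they were already isolated in Lemma~17. I then split the levels at threshold $t+1$ with $t=\lceil \lg(n/r)\rceil$ and treat non-decreasing and decreasing runs together, since the move argument needs only structural facts about which merges are skipped.

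For the upper levels $h<t+1$, I would argue exactly as in Lemma~9. Each of the $r$ run elements lies in one node per level (Fact~3). It is charged at most $1.5$ there, whether or not the merge is skipped, which gives at most $1.5r(t+1)$.

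For the lower levels $h\ge t+1$, I follow Lemma~10. Case~3 nodes cannot exist, since node sizes are at most $r$, as in the proof of Lemma~6. Case~4 nodes charge nothing to the run. The key point is that Case~1 nodes perform no merge, and so no moves are charged to the run there, in both settings:
\begin{itemize}
\item \emph{Non-decreasing run.} Every Case~1 node passes the standard skip check. As argued in Lemma~14, size-$2$ base cases inside the run return $x_2=\texttt{false}$, so the decreasing check is never entered and the standard check fires.
\item \emph{Decreasing run.} Every Case~1 node passes the decreasing skip check, as noted in the proof of Lemma~16, and returns $(\texttt{false},\texttt{true})$. The child reversals are deferred, so no reversal moves occur inside the node either.
\end{itemize}
The step needing care is a size-$1$ or size-$2$ node, where the base cases make no moves at all. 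It is consistent, since base cases perform no merges.

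Hence only the at most two Case~2 boundary nodes per level are charged. Each contains at most $\lceil n/2^h\rceil-1\le n/2^h$ run elements, so the charge per level is at most $3n/2^h$. The geometric sum over $h\ge t+1$ is at most $3(n/2^t)$.

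Finally, I add the two regimes and write $t=\lg(n/r)+\gamma$ with $\gamma\in[0,1)$:
\[
1.5r(t+1)+3n/2^t \;=\; 1.5r\lg(n/r)+r\bigl(1.5+1.5\gamma+3\cdot 2^{-\gamma}\bigr).
\]
The secant bound $3\cdot2^{-\gamma}+1.5\gamma\le 3$ from Theorem~11 then gives $1.5r\lg(n/r)+4.5r$.

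I expect the main obstacle to be the Case~1 claim in the presence of the new skip logic. I must confirm that a node lying entirely inside a run never merges, even though the node's first child may have been identified differently, for instance a one-element child that returns $(\texttt{true},\texttt{true})$. I would handle this by checking the flag combinations: for a non-decreasing run, any child of size at least $2$ has $x_2=\texttt{false}$, so the standard check applies. For a decreasing run, every sub-node has $x_2=\texttt{true}$ by induction, so the decreasing check applies and succeeds.
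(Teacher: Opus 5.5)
Your proposal is correct and follows essentially the same route as the paper. Both use the uniform $1.5$-per-element charge with reversals and the $(n-1)/2$ overhead set aside, the threshold split at $t+1$ giving $1.5r(t+1)$ above and at most two boundary nodes per level summing geometrically to $3(n/2^t)$ below, and the secant bound from Theorem~11. Your explicit flag check that Case~1 nodes skip in both orientations, including the size-$3$ node with a size-$1$ left child, is more careful than the paper, which simply asserts it; one small slip is that the reversal cost is Lemma~18, not Lemma~17.
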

    
\begin{proof}
    We analyze the merge moves using the amortized accounting framework established in Section 3.2, which assigns a uniform charge of $1.5$ moves to every run element participating in an active merge (plus the global overhead of $(n-1)/2$ moves). The core structural savings of the algorithm rely on bypassing the merge step for pure nodes (Case 1):

    \begin{itemize}
        \item For a non-decreasing run, Case 1 nodes successfully pass the standard non-decreasing skip check, bypassing the merge step and incurring zero moves.
        \item For a strictly decreasing run, Case 1 nodes successfully pass the decreasing skip check, similarly bypassing the merge step and incurring zero moves.
    \end{itemize}

    \noindent
    Because the directional orientation of the elements within the run does not change the maximal capacity or structural boundaries of the nodes across the merging tree, the worst-case move charges apply identically in both regimes:
    
    \begin{itemize}
        \item At the upper levels of the tree ($h < t+1$), the total move charge incurred by the run's elements is bounded by $1.5r(t+1)$, consistent with the derivations in Lemma 9.
        \item At the lower levels of the tree ($h \ge t+1$), the sizes of boundary nodes (Case 2) down the tree form a convergent geometric series bounding moves at $3(n/2^t)$.
    \end{itemize}
    \noindent
    In total, per-run charge is $1.5r(t+1)+3(n/2^t)$ moves. By applying the convexity bound established in Theorem 11, this expression is bounded by $1.5r \log(n/r) + 4.5r$ total moves per run.
\end{proof}

\begin{theorem}[Optimized Move Complexity]
    An array of size $n$ containing $r_\#$ pre-existing non-decreasing or strictly decreasing runs of lengths $r_1, r_2, \dots, r_{r_\#}$ is sorted in at most $1.5nH + 6.5n$ moves, where the run-based entropy $H = \sum_{i=1}^{r_\#} (r_i/n) \lg \left(n/r_i\right)$.
\end{theorem}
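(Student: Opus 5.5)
The plan is to split the total data movement of directional mergesort\textsuperscript{+1} into three disjoint contributions and bound each with an earlier result. The contributions are: moves made by reversals, moves made by active merges, and the global rounding overhead of the half-buffered merge. Skip checks (standard or decreasing) and base cases make no moves, so nothing else contributes.

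First, the reversal moves. Lemma 17 shows that every element takes part in at most one physical reversal, including the final top-level reversal in \textsc{DirectionalMergesort\textsuperscript{+1}}. Hence all reversals together cost at most $1.5n$ moves. I would briefly check that the top-level reversal is covered by this argument: it happens only when \textsc{Sort} returns $(\texttt{false},\texttt{true})$ at the root, which means no recursive frame ever reversed anything. So the disjointness claim in Lemma 17 still holds.

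Second, the merge moves. As in Section 3.2, each active merge at an internal node $v$ costs at most $1.5|v| + 1/2$ moves. Summing the $+1/2$ terms over the $n-1$ internal nodes gives at most $(n-1)/2$. The remaining $1.5|v|$ pool is distributed uniformly to the elements of $v$. Every element of the array belongs to exactly one maximal run, either non-decreasing or strictly decreasing. So summing the per-run bound of Lemma 18 over all $r_\#$ runs covers every charge. Using $\sum r_i = n$, the merge moves total at most $\sum_i \bigl(1.5 r_i \lg(n/r_i) + 4.5 r_i\bigr) = 1.5nH + 4.5n$, exactly as in Theorem 11.

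Adding the three parts gives
\begin{equation*}
1.5nH + 4.5n + \tfrac{n-1}{2} + 1.5n \le 1.5nH + 6.5n .
\end{equation*}

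The main point to handle carefully is that the run decomposition used in the charging is well defined once both run types are present. One subtlety is that a run endpoint may be grouped with a neighbouring element into a decreasing size-$2$ base case. The other is that maximal non-decreasing and strictly decreasing runs could overlap at a single element. I would resolve both by fixing any partition of the array into runs, each monotone in one of the two senses, and applying Lemma 18 to each part of that partition. Lemma 18's charging only needs Case 1 nodes inside a part to be skipped, which holds for either orientation. It never needs the part to be maximal, since the boundary nodes are charged in the worst case anyway. Therefore the per-run bounds add up without double counting.
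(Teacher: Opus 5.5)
Your proposal is correct and follows the paper's proof. It uses the same three-way split: the summed per-run merge charge $1.5nH+4.5n$, the $(n-1)/2$ ceiling overhead of the half-buffered merge, and the $1.5n$ reversal budget. Your extra checks are sound refinements the paper leaves implicit: that the top-level reversal cannot double-count an element, and that the per-run move lemma only needs Case~1 nodes inside a monotone piece to be skipped, so any partition into monotone pieces works. Note only that the lemmas you cite as 17 and 18 are Lemmas 18 and 19 in the paper.
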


\begin{proof}
    Paralleling the derivation in Theorem 11, summing the costs in Lemma 19 yields a total merge cost of $1.5nH+4.5n$ moves. To obtain the final move bounds, we must reintroduce all remaining global overheads. The ceiling overhead from copying halves to the buffer in the half-buffered merge adds at most $(n-1)/2$ moves (see Section 3.2). Furthermore, the deferred physical reversal of decreasing subarrays contributes at most $1.5n$ moves shown in Lemma 18. Thus, the final data movement complexity is then:
    \begin{equation*}
        1.5nH+4.5n+1.5n+(n-1)/2 \leq 1.5nH+6.5n \qedhere
    \end{equation*}
\end{proof}

\subsection{Stack Space Reduction}
The previous two algorithms utilized an explicit call stack of $O(\lg n)$ words to coordinate the processing of subproblems. However, using the static nature of merge boundaries, they can be derived on the fly with little extra information packed in $O(1)$ words by the technique from \cite[Appendix A]{XC26} (originally used to rebuild subtrees in an array-based scapegoat tree like structure). In this optimization, we utilize this technique to traverse the merging tree (in Definition 1) to coordinate merge operations of the algorithm, maintaining the following information:

\begin{itemize}
    \item The index of the current node $v$ (according to a 1-indexed binary tree), used to determine if the current node is a left or right child according to the parity of the value.
    \item The subarray boundaries of the current node in $l,r$.
    \item A bitmasked stack $\mathcal{D}$ containing the differences in sizes between the children of all ancestor nodes of $v$. One bit is sufficient per entry because Fact 2 implies the difference in size between siblings is either $0$ (when parent size is even) or $1$ (when parent size is odd). This information is used to deduce the range of sibling nodes, and then parent nodes for backtracking.
\end{itemize}

\begin{algorithm}
\caption{Node Class}
\begin{algorithmic}
\Class{Node}
    \Procedure{Node}{$n$}
        \State $v \gets 1, \ell \gets 0, r \gets n, \mathcal{D} \gets 0$ \Comment{Initialize State Variables}
        \While{$r-\ell>1$} \Call{MoveLeft}{\null} \EndWhile \Comment{Begin Postorder Traversal at Leftmost Leaf}
    \EndProcedure
    
    \Procedure{MoveLeft}{\null}
        \State $\mathcal{D}\gets(\mathcal{D}\ll1)+((r-\ell)\&1)$ \Comment{Push Children Size Difference}
        \State $v \gets v \ll 1$ \Comment{Update Node Index}
        \State$r \gets \lfloor (\ell+r)/2 \rfloor$ \Comment{Update Subarray Boundaries}
    \EndProcedure
    
    \Procedure{MoveRight}{\null}
        \State $\mathcal{D}\gets(\mathcal{D}\ll1)+((r-\ell)\& 1)$ \Comment{Push Children Size Difference}
        \State $v \gets (v \ll 1)+1$ \Comment{Update Node Index}
        \State$\ell \gets \lfloor (\ell+r)/2 \rfloor$ \Comment{Update Subarray Boundaries}
    \EndProcedure
    
    \Procedure{MovePar}{\null}
        \State $d \gets \mathcal{D} \& 1$ \Comment{Difference in Size from Sibling Node}
        \State $\mathcal{D} \gets \mathcal{D} \gg 1$ \Comment{Pop from Stack}
        \If{$v\& 1$}
            $\ell \gets \ell-(r-\ell-d)$ \Comment{Right Child, Sibling to Left Smaller by $d$}
        \Else
            \ $r \gets r+(r-\ell+d)$ \Comment{Left Child, Sibling to Right Larger by $d$}
        \EndIf
        \State $v \gets v \gg 1$ \Comment{Update Node Index}
    \EndProcedure

    \Procedure{NextPostOrder}{\null} \Comment{Standard Iterative Postorder Traversal Manager}
        \If{$v \& 1$} \Comment{Right Child: Both Subtrees Completed, Process Parent}
            \State\Call{MovePar}{\null}
        \Else \Comment{Left Child: Proceed to Right Sibling's Subtree}
            \State\Call{MovePar}{\null} \Comment{Backtrack}
            \State\Call{MoveRight}{\null} \Comment{Move to Right Subtree}
            \While{$r-\ell>1$} \Call{MoveLeft}{\null} \EndWhile \Comment{Begin at Leftmost Node of Subtree}
        \EndIf
    \EndProcedure
\EndClass
\end{algorithmic}
\end{algorithm}

\noindent
The \textsc{MoveLeft} and \textsc{MoveRight} methods are trivially correct as they simply follow the static index partitioning logic of the base algorithm. \textsc{MovePar} peeks $d \gets \mathcal{D}\&1$ to compute the sibling’s size and expand the active interval boundary accordingly. The correctness of restoring parent intervals from $\mathcal{D}$ can be shown using the symmetric inversion scheme in \cite[Appendix A]{XC26}.

Let $v,l,r,\mathcal{D}$ be in some valid initial state. We now show that \textsc{MovePar} restores the state from the left child. In the left child $\ell$ remains identical, however:
\begin{align*}
r_\text{new}&=\lfloor (\ell+r)/2 \rfloor=\ell+\lfloor (r-\ell)/2 \rfloor \\
\mathcal{D}_\text{new}&=(\mathcal{D}\ll1)+((r-\ell)\&1)
\end{align*}

\noindent
Note that $(\ell+r)\&1$ is equivalent to the difference in size between children. If we call \textsc{MovePar}, the node is a left child ($(v\&1)=0$), and the procedure executes the else block, resulting in:
\begin{align*}
    r_{\text{final}} &= r_{\text{new}} + (r_{\text{new}} - \ell + d) \\
    &= \ell+2\lfloor (r-\ell)/2 \rfloor +((r-\ell)\&1) \\
    &= \ell+(r-\ell)=r \\[0.25em]
    \mathcal{D}_\text{final}&=\mathcal{D}_\text{new}\gg 1\\
    &=((\mathcal{D}\ll1)+((r-\ell) \& 1)) \gg 1 \\
    &= \mathcal{D}
\end{align*}
\noindent
The original state is restored, hence \textsc{MovePar} correctly backtracks from the left child. By symmetry, a parallel derivation shows \textsc{MovePar} also restores the correct state from the right child.

Using the \textsc{MoveLeft}, \textsc{MoveRight} and \textsc{MovePar} move primitives, we conduct an iterative post-order traversal of the tree (via \textsc{NextPostOrder}) to process subproblems in the correct order without an explicit recursion stack. However, we still need to keep track of the return flags indicating the state of nodes (pure/impure). This is done using another bitmasked stack $\mathcal{A}$ containing the return flags of previous left children in the traversal and a single-bit variable $b$ containing the return flag of the previous right child.

\begin{algorithm}
\caption{Directional Mergesort\textsuperscript{+2}}
\newcommand{\New}{\textbf{new}\ }
\begin{algorithmic}[1]
\Function{Sort}{$A[], \ell, r, a, b$}
    \Comment{Processing Subarray $A[\ell..r)$, Children Returned $a,b$}
    \If{$r-\ell < 2$}
        \Return{\texttt{true}}
    \EndIf
    \State $m \gets \lfloor (\ell + r)/2 \rfloor$
    \If{$A[m - 1] \le A[m]$}
        \Return{$a \land b$} \Comment{Skip Check}
    \EndIf
    \If{$b$}
        \Call{MergeBw}{$A,\ell,m,r$} \Comment{Directional Merging}
    \Else
        \ \Call{MergeFw}{$A,\ell,m,r$}
    \EndIf
    \State \Return{\texttt{false}}
\EndFunction
\Procedure{DirectionalMergesort\textsuperscript{+2}}{$A[], n$}
    \State $\mathcal{A} \gets 0, b \gets 0, p \gets \New \textsc{Node}(n)$
    \While{$p.v>0$} \Comment{Repeat Until the Tree is Exhausted}
        \State$x \gets$ \Call{Sort}{$A,p.\ell,p.r,\mathcal{A} \& 1, b$} \Comment{Process Current Node}
        \If{$p.r-p.\ell>1$} $\mathcal{A}\gets \mathcal{A} \gg 1$ \Comment{Dequeue the Consumed Left Return Flag} \EndIf
        \If{$p.v \& 1$}
            $b \gets x$ \Comment{Right Child: Assign to Right Return Variable}
        \Else
            \ $\mathcal{A} \gets (\mathcal{A} \ll 1)+x$ \Comment{Left Child: Append to Left Returns Stack}
        \EndIf
        \State\Call{$p.$NextPostOrder}{\null} \Comment{Move to Next Node}
    \EndWhile
\EndProcedure
\end{algorithmic}
\end{algorithm}

\noindent
Because the parent is always visited immediately after its right child in a postorder traversal, the variable $b$ is guaranteed to hold the return value of the current node's right child (if one exists). Because the bit stack $\mathcal{A}$ contains the return flags of left subtrees along the root to $v$ path ordered by increasing depth, the top of the stack contains the deepest left subtree along the path, which is correctly the current node's left subtree. Note that the dequeue step is not executed for leaf nodes as they have no left subtree, and thus do not consume entries.

\begin{theorem}
    Directional Mergesort\textsuperscript{+2} sorts an array of size $n$ with run-based entropy $H$ (defined for exclusively non-decreasing runs) using no more than $nH + 3n$ comparisons and $1.5nH + 5n$ element moves. Excluding the merge buffer of size $\lceil n/2 \rceil$, the algorithm uses $O(1)$ auxiliary words.
\end{theorem}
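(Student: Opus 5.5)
The plan is to reduce the theorem to Corollary 12 through a simulation argument. Directional Mergesort\textsuperscript{+2} will be shown to perform exactly the same sequence of skip checks, merges (in the same directions) and element moves as Algorithm 1 on every input. The comparison and move bounds $nH+3n$ and $1.5nH+5n$ then transfer verbatim. The only new work is the space claim.

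\textbf{Step 1: traversal correctness.} First I will argue that the \textsc{Node} object visits exactly the nodes of the merging tree of Definition 1, in postorder. Each visit will carry the correct interval $[\ell,r)$ and the correct index $v$. The loop should terminate once $v=0$, which happens after the root is processed and \textsc{MovePar} is applied to it.

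\begin{itemize}
\item \textsc{MoveLeft} and \textsc{MoveRight} reproduce the split $m=\lfloor(\ell+r)/2\rfloor$ of Algorithm 1.
\item \textsc{MovePar} inverts them. The left-child case is the derivation given in the text. For the right-child case I will check symmetrically that $\ell_{\text{new}}=\ell+\lfloor (r-\ell)/2\rfloor$ and $r-\ell_{\text{new}}=\lceil (r-\ell)/2\rceil$. Subtracting $(r-\ell_{\text{new}}-d)=\lfloor (r-\ell)/2\rfloor$ then returns $\ell$.
\item \textsc{NextPostOrder} is the standard iterative postorder step. Induction on the number of calls shows that the visiting order matches the order in which Algorithm 1's recursion finishes its calls.
\end{itemize}

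\textbf{Step 2: flag correctness.} This is the main obstacle. By induction over the postorder, I will prove the invariant that, when a node $u$ is about to be processed:
\begin{itemize}
\item $b$ equals the return value of $u$'s right child whenever $u$ is internal;
\item the bit stack $\mathcal{A}$ holds, top first, the return values of the left children whose right siblings lie on the root-to-$u$ path (or are $u$ itself), ordered by decreasing depth.
\end{itemize}

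The key facts are these. In postorder the right child is processed immediately before its parent, so $b$ is fresh. A left child's flag is pushed right before the traversal descends into its sibling's subtree. The entire sibling subtree pushes and pops in balanced fashion: each internal node pops exactly one entry, each left child pushes exactly one, and within a subtree the number of internal nodes equals the number of left-child nodes. Hence the sibling flag sits on top exactly when the parent is processed.

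Given this invariant, \textsc{Sort} in Algorithm 4 receives the same $(a,b)$ as Algorithm 1. It therefore performs the identical skip check and merge direction and returns the identical flag. At that point the arrays coincide at every step, and the cost bounds follow from Theorems 8 and 11.

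\textbf{Step 3: space.} The state consists of $v,\ell,r$ (each $O(\lg n)$ bits), $\mathcal{D}$ and $\mathcal{A}$, plus $b$, $x$ and the merge indices.
\begin{itemize}
\item $\mathcal{D}$ holds one bit per ancestor, so at most $\lceil\lg n\rceil$ bits.
\item $\mathcal{A}$ holds at most one bit per depth level, so also $O(\lg n)$ bits.
\end{itemize}
Each of these therefore fits in $O(1)$ machine words of $\Theta(\lg n)$ bits. Since \textsc{Sort} is no longer recursive, the total space beyond the $\lceil n/2\rceil$ buffer is $O(1)$ words.
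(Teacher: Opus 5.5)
Your proposal is correct and follows the same route as the paper. Like the paper, you show that Directional Mergesort\textsuperscript{+2} replays Algorithm~1's postorder sequence of skip checks and merges with identical purity flags, so the bounds of Corollary~12 transfer, and you observe that $\mathcal{D}$ and $\mathcal{A}$ are $O(\lg n)$-bit stacks fitting in $O(1)$ words; your \textsc{MovePar} right-child check and push/pop argument supply details the paper only asserts, and you may note that the paper also adds an $O(nH+n)$ running-time remark the statement does not require.

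One fix: as worded, your invariant for $\mathcal{A}$ omits its top entry when $u$ is internal, namely the flag of $u$'s own left child, whose sibling is $u$'s right child and so not on the root-to-$u$ path. Your balance argument, turned into an induction showing that processing a subtree leaves the entries below it untouched (equal push and pop counts alone do not give this), establishes exactly that entry, so state the invariant to include it.
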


\begin{proof}
    The optimized version processes all subproblems in postorder identically to the base variant, maintaining the correct return flags, resulting in identical behavior to the base variant (making the same comparisons and moves). Hence, the comparison and move bounds in Corollary 12 is retained.

    However, the total auxiliary space usage, excluding the merge buffers, is reduced to $O(1)$ words: the algorithm utilizes a fixed number of single word variables alongside the two bit stacks $\mathcal{A}$ and $\mathcal{D}$. Because the height of the merge tree is bounded by $\lceil \lg n \rceil$, $\mathcal{D}$ and $\mathcal{A}$ contain at most $\lceil \lg n \rceil$ single-bit entries. According to the standard assumption that words are at least $\lg n$ bits wide, each bit stack can fit within a single word.

    The arithmetic operations for managing the tree traversal is $O(n)$. This is because an iterative postorder traversal in a tree of $2n-1$ nodes calls the move primitives no more than $2((2n-1)-1)$ times, as each edge is walked at most twice. Each move primitive makes $O(1)$ arithmetic operations, totaling $O(n)$ arithmetic operations across the algorithm's execution. Consequently, the overall time complexity of the algorithm remains bounded by $O(nH+n)$.
\end{proof}

\subsection{Summary}
\begin{corollary}
    Directional Mergesort\textsuperscript{++} sorts an array of size $n$ with run-based entropy $H$ (defined for both non-decreasing and decreasing runs) using no more than $nH + 3n$ comparisons and $1.5nH + 6.5n$ moves. Excluding the merge buffer of size $\lceil n/2 \rceil$, the algorithm uses $O(1)$ auxiliary words.
\end{corollary}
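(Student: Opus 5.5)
The plan is to define directional mergesort\textsuperscript{++} as the combination of the two optimizations: the \textsc{Sort} routine of directional mergesort\textsuperscript{+1} (two flags, decreasing skip check, deferred reversals, the modified direction rule) driven by the iterative post-order traversal of directional mergesort\textsuperscript{+2}. I would then transfer the bounds from the earlier results by showing that this combined algorithm performs exactly the same sequence of comparisons, merges and reversals as directional mergesort\textsuperscript{+1}. The comparison and move bounds of the corollary then follow immediately from Theorem~17 and Theorem~20, and only the space bound needs fresh work.

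The first step is the simulation argument. As in the proof of Theorem~21, I would argue that the \textsc{Node} class visits the merging tree in post-order with the correct interval $[\ell,r)$ at every node. Invoking \textsc{Sort} at each node with its children's return values is then equivalent to the recursive execution, because the recursive version performs all comparisons, reversals and merges after both recursive calls return. There are two changes relative to Algorithm~4.
\begin{itemize}
\item The return values are now pairs $(x_1,x_2)$. The stack $\mathcal{A}$ stores two bits per left child, and $b$ becomes a two-bit variable.
\item The size-$2$ base case of the $+1$ variant must be treated as a leaf of the traversal, or equivalently left in place with its size-$1$ children visited first; either choice must match the recursive version.
\end{itemize}
The simplest option is to let the traversal descend to size-$1$ leaves as before, which return $(\texttt{true},\texttt{true})$, and to process a size-$2$ node with the general rule. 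I would check that on two singleton children the general rule reproduces the base case. If $A[\ell]>A[\ell+1]$, the decreasing skip check returns $(\texttt{false},\texttt{true})$. Otherwise the reversals act on length-$1$ arrays and are no-ops, and the standard skip check returns $(\texttt{true},\texttt{false})$. Since both routes use one comparison and no moves, the behaviour is identical. The final top-level reversal when the root returns $(\texttt{false},\texttt{true})$ is carried over unchanged.

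The second step is the space bound. The state consists of $v,\ell,r$, the one-bit-per-level stack $\mathcal{D}$, and the stack $\mathcal{A}$ with $2\lceil\lg n\rceil$ bits, plus $O(1)$ scalars. The index $v<2^{\lceil\lg n\rceil+1}\le 4n$ fits in $O(1)$ words. Under the word-size assumption $w\ge\lg n$, $\mathcal{A}$ fits in two words. The reversal and merge routines use $O(1)$ words outside the buffer of size $\lceil n/2\rceil$.

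I expect the main obstacle to be the careful accounting of the flag stack in the two-bit setting. Consumption and pushing must use two-bit shifts, the dequeue must happen exactly at internal nodes, and the size-$2$ equivalence must hold. Everything else reduces to citing Theorems~17, 20 and 21.
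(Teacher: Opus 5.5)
Your overall route matches the paper's: run the \textsc{Sort} of directional mergesort\textsuperscript{+1} under the iterative post-order traversal, widen $\mathcal{A}$ and $b$ to two-bit entries, and inherit Theorems~17 and~20 by step-for-step simulation. However, the one step you add beyond the paper fails. In Algorithm~2 the standard skip check is guarded by $\lnot(a_2\land b_2)$. Two singleton children both return $(\texttt{true},\texttt{true})$, so this guard is false at a size-2 node. Hence when $A[\ell]\le A[\ell+1]$, the general rule does the following: the decreasing check fails, the reversals are no-ops, the standard skip check is skipped, and since $b_1=\texttt{true}$ it falls through to \textsc{MergeBw}. It therefore spends a second comparison plus the merge's moves, and returns $(\texttt{false},\texttt{false})$ rather than $(\texttt{true},\texttt{false})$. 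So your ``simplest option'' does not reproduce the base case, and you lose the exact simulation on which citing Theorems~17 and~20 rests.

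The damage is twofold. First, size-2 Case~1 nodes inside non-decreasing runs are no longer free, although Lemma~6 and Lemma~10 both assume they are. Second, the non-decreasing purity flag is destroyed above every such pair, because every subtree rooted at a node of size at least $2$ contains a size-2 node. That flag drives the direction rule. Take a right-boundary node whose left child ($|L|\ge 2$) lies inside the run and whose right child has $b_1=b_2=\texttt{false}$. Now $a_1=\texttt{false}$, so $b_1\lor\lnot(b_2\lor a_1)$ is true and \textsc{MergeBw} is chosen instead of \textsc{MergeFw}. This loses the right-boundary saving of Lemma~6, on which Lemma~14 relies.

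The repair is one of the two options you listed and then set aside. You can keep the explicit $r-\ell=2$ branch in the iterative \textsc{Sort}. The size-1 children are still visited but do no work, and the left child's flag is consumed by the usual dequeue (since $r-\ell>1$) and otherwise ignored. Alternatively, make size-2 nodes leaves of the traversal by changing the descent loops and the dequeue test from $r-\ell>1$ to $r-\ell>2$. With either choice the iterative algorithm performs exactly the comparisons, reversals and merges of directional mergesort\textsuperscript{+1} in the same order. Your space accounting then finishes the proof exactly as the paper does.
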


\begin{proof}
    By a parallel application of stack space reduction optimization to Directional Mergesort\textsuperscript{+1}, the resulting algorithm operating within $O(1)$ words inherits the bounds in Theorem 17 and Theorem 20. The bit stack $\mathcal{A}$ and variable $b$ is expanded to accommodate two-bit entries, requiring twice the space, but this does not violate $O(1)$ word space constraints.
\end{proof}

\begin{remark}
    Whilst Corollary 22 establishes an upper bound of $nH + 3n$ comparisons for simplicity, a tighter analysis of the right boundary accounting for the size 2 base case yields a stronger bound of $nH + 3n - r_\#$. The full proof is deferred to Appendix A.2.
\end{remark}

\section{Concluding Remarks}
In this paper, we introduced directional mergesort, a minimalistic adaptive sorting algorithm that matches the optimal $nH + 3n$ comparisons bound set by state-of-the-art natural mergesorts Powersort and Peeksort. Unlike predecessors that rely on dynamic run-scanning and variable merge tree geometries, our approach operates strictly over static index partitions using localized skip checks and dynamic merge-direction choices. This ensures the merging tree structure remains completely static, simplifying the underlying element-charging analysis across tree levels.

Furthermore, directional mergesort\textsuperscript{++} then leverages this data oblivious merge tree geometry to compress both child-size differences and purity return flags into single word bit stacks, reducing auxiliary space usage (aside from the merge buffer) to $O(1)$ words. This demonstrates that optimally run-adaptive sorting does not require $O(\lg n)$ words for boundary tracking or intrusion into the data itself to simulate this space. Furthermore, by tracking non-decreasing and strictly decreasing runs using two flags, the algorithm adapts to both orientations of runs without incurring any additional overhead in comparisons, matching the decreasing run adaptivity of natural mergesorts.

When considering lower-order terms, a more careful analysis in Appendix A.2 gives the tighter bound of $nH+3n-r_\#$ comparisons for directional mergesort\textsuperscript{++}, exactly matching the subtractive $-r_\#$ term from Powersort. However, the proven comparison bound of the algorithm remains slightly worse than that of Peeksort, which has a subtractive term of $-2r_\#$.

Practical implementations of the algorithm could consider switching to a modified insertion sort that keeps track of non-decreasing or decreasing purity for small subarrays ($r - \ell \le \text{min\_run}$). Additionally, optimized merge procedures such as galloping during merges could accelerate unbalanced run comparisons, while ping-pong merging from \cite{CG14} could reduce element moves from $1.5nH + O(n)$ to $nH + O(n)$ (using half-buffered merges for the final pass to avoid a full size $n$ buffer).

Whilst directional mergesort has optimal worst-case comparison bounds, its static merge steps may cause worse performance in the average case compared to other run-adaptive algorithms, leading to inferior performance in practice for typical inputs. We leave the empirical validation of its practical performance against production implementations like Timsort to future work.

A potential practical application of directional mergesort\textsuperscript{++} involves substituting the linear-time stable in-place merge algorithm from \cite{KK08} for the half-buffered merge. This removes the need for a merge buffer of size $\lceil n/2 \rceil$, reducing total auxiliary space usage to $O(1)$ words. However, the large constant overhead (in comparisons and moves) of the in-place merge procedure causes us to lose our exact bounds of $nH+3n-r_\#$ comparisons and $1.5nH+6.5n$ moves. However, the running time of merges remains linear ($\Theta(m)$ to merge $m$ elements in total), so the overall time complexity of $O(nH+n)$ is retained. Whilst this combination was achieved by a recent result in \cite{GGS26}, their approaches incur significant overhead to avoid explicitly storing a stack of $O(\lg n)$ words, involving repeated array inspections or complex encoding techniques. This overhead is completely avoided by directional mergesort\textsuperscript{++}, whose static merging tree structure inherently supports the reduction to $O(1)$ words at the cost of only arithmetic operations.

Finally, we note that if the desired bound is only $nH+O(n)$ comparisons and $O(nH+n)$ running time ($H$ defined for exclusively non-decreasing runs), a simple mergesort with skip checks only is already sufficient. The proof mirrors Section 3.2, hence it is left to Appendix B.

\section*{Acknowledgements}
The authors would like to thank Jay Llanes for raising the question of whether a simple mergesort with skip checks could achieve $O(nH + n)$ complexity, which initiated this work.

AI tools were used solely for standard copy-editing tasks, such as language refinement and formatting. The technical aspects, including all algorithms and proofs, was formulated entirely by the authors, who assume responsibility for whole content.

\appendix
\newpage
\section{Tightened Boundary Analysis}
In this Appendix, we provide the complete analysis of directional mergesort\textsuperscript{++}.

\subsection{Directional Merging Savings}
This subsection analyzes the correctness of the directional merge for all possible run configurations in directional mergesort\textsuperscript{+1}, managed by the \textit{backward condition} on line 15 of Algorithm 2. As directional merge savings only apply to boundary nodes where the run contains an entire half and extends into the opposite half (Subcase B), we will only consider such situations. We otherwise count no directional merging savings and any merge direction is valid. We begin by considering non-decreasing runs relative to a boundary node $v$ with left child $L$ and right child $R$:
\begin{itemize}
    \item \textbf{Case i (Prefix):} A non-decreasing run fully contains $L$ so $a_1=\texttt{true}$. As the end of the non-decreasing run is in $R$, it cannot be pure non-decreasing (as the end of a non-decreasing run is decreasing) hence $b_1=\texttt{false}$. Then, the backward condition evaluates to $b_1 \lor \lnot( b_2 \lor a_1) =\texttt{false}\lor\lnot(b_2 \lor \texttt{true})=\texttt{false}$, so the algorithm chooses \textsc{MergeFw}.
    \item \textbf{Case ii (Suffix):} The run fully contains $R$, and extends leftwards into $L$. Then $b_1=\texttt{true}$, ensuring the backward condition evaluates to \texttt{true}. The algorithm calls \textsc{MergeBw}.
\end{itemize}
\noindent
As discussed in Section 2, these are the correct merge directions that generate directional merging savings for non-decreasing runs. We apply a mirrored argument for strictly decreasing runs:
\begin{itemize}
    \item \textbf{Case i (Prefix):} The decreasing run fully contains $L$ ($a_2=\texttt{true}$) and extends rightwards into $R$, so $b_2=\texttt{false}$ (as $R$ contains the end of a decreasing run which is non-decreasing). The analysis splits into two further cases:
    \begin{itemize}
        \item \textbf{Subcase X ($|L|>1$):} As a sequence strictly longer than $1$ element cannot be both non-decreasing and decreasing simultaneously, $a_2=\texttt{true}$ implies $a_1=\texttt{false}$. Hence, the backward condition evaluates to $b_1 \lor \lnot( b_2 \lor a_1) =b_1\lor\lnot(\texttt{false} \lor \texttt{false})=b_1\lor \texttt{true}=\texttt{true}$ so the algorithm chooses \textsc{MergeBw}. The merge then exhausts all elements in $L$ before selecting any run elements from $R$ (as established in Section 4.1).
        \item \textbf{Subcase Y ($|L|=1$):} Merging occurs only when $\vert{}R\vert{}=2$ (and $m=|L|+|R|=3$), as $\vert{}R\vert{}=1$ is handled by base cases. We are interested only when the decreasing prefix has size $2$, spanning $1$ element in $L$ and $1$ in $R$ ($k_1>k_2 \leq k_3$) as a $1$-element run falls under Subcase C and a $3$-element run bypasses the merge entirely. Because $|L|=1$, the base case returns $a_1=\texttt{true}$ and the backward condition $b_1 \lor \lnot(b_2 \lor a_1)$ simplifies to $b_1$, leaving the merge direction unknown. Nevertheless, merging $m=3$ elements requires at most $m-1=2$ comparisons regardless of whether \textsc{MergeBw} or \textsc{MergeFw} is called. The worst-case analysis of Lemma 16 charges at least $2$ comparisons in this scenario: $1$ to the run containing $k_1k_2$ in Subcase B (or $2$ at upper levels) and $1$ to $k_3$ in Subcase A (or upper levels); the upper bound remains valid regardless of merge direction.
    \end{itemize}
    \item \textbf{Case ii (Suffix):} The decreasing run fully contains $R$ ($b_2=\texttt{true}$) and extends leftwards into $L$. Because $r-\ell\leq2$ is handled by base cases, we deduce $r-\ell\geq3$, and $|R|=\lceil (r-\ell)/2\rceil\geq2$ according to the index partitioning logic of the algorithm. Because a sequence of size $2$ or greater cannot be both non-decreasing and decreasing simultaneously, $b_2=\texttt{true}$ implies $b_1=\texttt{false}$. The backward condition then evaluates to $b_1 \lor \lnot( b_2 \lor a_1) =\texttt{false}\lor\lnot(\texttt{true} \lor a_1)=\texttt{false}$, thus the else branch executes correctly utilizing \textsc{MergeFw} to exhaust $R$ first.
\end{itemize}

\noindent
Hence, the directional merging logic of Algorithm 2 successfully captures directional savings for decreasing runs where applicable, ensuring the comparison charging (in Lemma 16) remains valid.

\subsection{Tightened Bound}
Here we prove the refined bound for Corollary 22 noted in Remark 23. This is done by a more precise analysis of the boundary path.

\begin{theorem}[Tighter Comparison Bound]
    Directional mergesort\textsuperscript{++} sorts an array of size $n$ containing $r_{\#}$ runs in at most $nH+3n-r_{\#}$ comparisons (where $H$ is defined for both non-decreasing and decreasing runs).
\end{theorem}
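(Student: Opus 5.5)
The plan is to keep the global accounting of Theorem 17 (at most $n-1$ skip-check comparisons, and every other comparison charged to the winning element of some run). The saving of one comparison per run will come from a sharper bound on the right-boundary telescoping sum in Lemma 6 and Lemma 16. It suffices to show that the merge comparisons charged to each run of length $r$ are at most $r\lg(n/r)+2r-1$. Summing over the $r_\#$ runs then gives $nH+2n-r_\#$, and adding the skip checks gives $nH+3n-r_\#-1\le nH+3n-r_\#$.

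\textbf{Right boundary.} I would revisit the right-boundary path $v'_t,\dots,v'_{k'}$. In the existing proof its cost telescopes to $n'_t-n'_{k'+1}$, and this is bounded by $\lceil n/2^t\rceil-1$ using only $n'_{k'+1}\ge 1$. The goal is to show the path terminates with $n'_{k'+1}\ge 2$, or otherwise that one unit is lost elsewhere along the path. The observation is that the path cannot descend into a size-$2$ base case and be charged there, because a size-$2$ node is resolved by the base-case comparison. That comparison is not a merge comparison and is already counted among the $n-1$ internal-node comparisons, since it replaces that node's skip check.

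Two termination situations need checking.
\begin{itemize}
\item If the run boundary splits a node of size $3$ into $L$ of size $1$ and $R$ of size $2$, with the boundary between them, the terminal charge is $|L|=1$ and $n'_{k'+1}=|R|=2$. This gives the needed extra $-1$.
\item If the boundary falls inside a size-$2$ child, that child is a base case. The path effectively stops at its parent, and I would redefine $n'_{k'+1}$ as the size of that base-case node, namely $2$, rather than its singleton children.
\end{itemize}
The terminal charges must be re-checked against the Subcase A bound $s'_i\le n'_i-n'_{i+1}$. For a right boundary in Subcase A (boundary in $L$), the run elements in $v'_i$ number at most $|L|-1\le n'_i-|R|-1+\ldots$, and in several cases there is slack. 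I would go through the possible sizes of the terminal node, $2$ or $3$, and of $R$ at the last nontrivial level.

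\textbf{Runs with no right-boundary nodes.} If a run has no right-boundary node in the lower levels, because it ends at index $n$ or its right end aligns with a split at level $\le t$, the right-boundary contribution is $0\le n/2^t-1$ whenever $n/2^t\ge 1$, which holds since $2^t<2n/r\cdot$. Otherwise it is absorbed by the upper-level bound. The last case is when the whole lower-level right cost is $0$ but $\lceil n/2^t\rceil-1<n/2^t$ was the only slack. Then I would use $\lceil n/2^t\rceil -2\le n/2^t-1$ directly. The final algebra is unchanged: $rt+2n/2^t-1\le r\lg(n/r)+2r-1$ by the same secant-line bound as in Lemma 7.

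\textbf{Main obstacle.} The hardest step is the case analysis at the bottom of the right-boundary path, especially for decreasing runs. There the Appendix A.1 Subcase Y ($|L|=1$, $|R|=2$) leaves the merge direction undetermined, so the exhausted-first saving may fail exactly at the level where I want the extra $-1$. I would handle this by charging that size-$3$ merge directly, at most $2$ comparisons, and comparing it against the telescoped budget of those levels. I would first try to show that budget is at least $3$ in that configuration, so that one unit is still saved.
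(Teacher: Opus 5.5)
Your outline is the same as the paper's proof. It uses the per-run target $r\lg(n/r)+2r-1$ and ends the right-boundary path as soon as the next boundary node is a size-$2$ base case, so that $n'_{k'+1}\ge 2$ and the right telescoping sum is at most $\lceil n/2^t\rceil-2\le n/2^t-1$. It handles the empty path via $n/2^t>r/2\ge 1$ and finishes with the same secant-line algebra. The Subcase A re-check you defer is immediate, since $s'_i\le |L|-1\le |R|=n'_i-n'_{i+1}$ is unaffected by the new termination rule.

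The genuine gap is the step you flag as the main obstacle: you leave it open, and the fix you propose fails. In Subcase Y the tightened budget at the terminal size-$3$ node is exactly $3-2=1$, and that node can be the first lower-level right-boundary node. For $A=[1,5,4,3,2,6]$, the decreasing run $4,3,2$ has $t=1$ and right-boundary node $[3,2,6]$ with $n'_t=3$. So the entire right-boundary budget is $1$, and there is no budget of $3$ to find.

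The missing observation is that the tightening changes nothing at the size-$3$ node. Its bound $|L|=n'_{k'}-n'_{k'+1}=1$ is the same as in Lemma 16, and the only budget removed is that of the size-$2$ node below, which performs no merge comparisons. Moreover, the direction there is not actually undetermined. By maximality of the decreasing run, $R=[k_2,k_3]$ has $k_2\le k_3$, so its base case returns $b_1=\texttt{true}$. The backward condition $b_1\lor\lnot(b_2\lor a_1)$ is then true, \textsc{MergeBw} runs, and the run is charged exactly $|L|=1$. Alternatively, the Appendix A.1 pairing works unchanged: the at most $2$ comparisons at that node are covered by this run's budget $1$ plus the budget $1$ that $k_3$'s run receives at the same node.

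A smaller loose end is $r=1$. Your inequality $0\le n/2^t-1$ needs $n/2^t\ge 1$, and $2^t<2n/r$ gives this only for $r\ge 2$; for $r=1$ and $n$ not a power of two, $n/2^t<1$. The paper treats this case separately, using $\lceil \lg n\rceil\le \lg n+1=r\lg(n/r)+2r-1$. Equivalently, a singleton run has no lower-level boundary nodes, and $0\le 2n/2^t-1$ because $2^t<2n$. Replace your phrase ``absorbed by the upper-level bound'' with this explicit argument.
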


\begin{proof}
    Recall that sequences of size 2 are resolved using the skip check comparison during the base case (see Algorithm 2), bypassing the merge step entirely and incurring 0 merge comparisons. Since the skip check comparisons are accounted for separately in the global $n-1$ overhead bound, we can refine our boundary node analysis when $r \geq 2$ ($r=1$ handled separately shortly).

    We redefine the termination condition of the right boundary path at a level $k'$ to be when:
    \begin{itemize}
        \item The right boundary of the run aligns with the partition between children (Subcase C).
        \item Or the size of the next right boundary node $\vert{}v'_{k'+1}\vert{} = 2$. Additionally, if the first boundary node at level $t$ already has size $2$, the boundary path is empty. As a boundary node of size 2 incurs no merge comparisons, and its children are leaves of size 1 which return instantly, no costs are incurred below and the early termination already accounts for all possible charges.
    \end{itemize}

    \noindent
    We examine the final size term $n'_{k'+1}$ for the right boundary:
    \begin{itemize}
        \item \textbf{Case i (Standard Termination):} The boundary perfectly aligns (Subcase C) at some level $k'$ where $n'_{k'} \geq 3$ then $n'_{k'+1} = |R| = \lceil n'_{k'}/2 \rceil \geq 2$.
        \item \textbf{Case ii (Base Case Termination):} The path terminates due to the redefined condition, thus $n'_{k'+1} = 2$.
    \end{itemize}

    \noindent
    These two cases are fully representative of all possible boundary path terminations. This is because a node of size $2$ cannot terminate the boundary path as the new condition terminates the path immediately if the next node is of size $2$. Furthermore, nodes of size $1$ are either purely run elements or non-run elements, so they can only be Case 1 or Case 4, and cannot be Case 2, hence they cannot be part of the boundary path. Thus, the lower bound $n'_{k'+1}\geq2$ holds for the right boundary.

    Summing the charges $n'_i - n'_{i+1}$ from level $t$ to level $k'$ yields a telescoping sum bounded by:

    $$\sum_{i=t}^{k'} (n'_i - n'_{i+1}) = n'_t - n'_{k'+1} \le \lceil n/2^t \rceil - 2 \leq n/2^t-1$$

    \noindent
    If the boundary path is empty, the charge is then $0\leq n/2^t-1$ (as $t\leq \lceil \lg (n/2) \rceil\leq \lg n$ when $r\geq2$). Combining with the previous left boundary cost of $n/2^t$, the total lower level charge is $2(n/2^t)-1$ comparisons. Integrating this tighter deduction with the upper level costs of $rt$ (in Lemma 14 and Lemma 15) and applying the convexity bound yields:
    $$r \lg(n/r) + 2r - 1$$
    \noindent
    \textbf{Case $r = 1$:} A run of length $r = 1$ consists of a single element at a leaf node in the merging tree. Since the depth of any leaf in a binary merging tree of size $n$ is at most $\lceil \lg n \rceil$, charging 1 comparison to the single run element for each ancestor of the leaf node yields:
    $$\lceil \lg n \rceil \le \lg n + 1 = 1  \lg(n/1) + 2(1) - 1 = r \lg(n/r) + 2r - 1$$

    \noindent
    Thus, the bound of $r\lg (n/r)+2r-1$ holds for all runs. Summing across all $r_{\#}$ runs  and reintroducing the global $n-1$ skip checks gives the final bounded comparisons:
    \begin{equation*} nH + 2n - r_{\#} + (n - 1) \le nH + 3n - r_{\#} \qedhere \end{equation*}
\end{proof}

\newpage
\section{Simple Adaptive Mergesort Analysis}
This Appendix is dedicated to answering the question of Jay Llanes, regarding the well-known simple adaptive mergesort with skip checks:
\begin{algorithm}
\caption{Simple Adaptive Mergesort}
\begin{algorithmic}[1]
\Procedure{Sort}{$A[], \ell, r$}
    \Comment{Processing Subarray $A[\ell..r)$}
    \If{$r-\ell < 2$}
        \Return
    \EndIf
    \State $m \gets \lfloor (\ell+r)/2 \rfloor$
    \State \Call{Sort}{$A,\ell,m$}
    \State \Call{Sort}{$A,m,r$}
    \If{$A[m - 1] \leq A[m]$}
        \Return \Comment{Skip Check}
    \EndIf
    \State\Call{MergeFw}{$A,\ell,m,r$}
\EndProcedure
\end{algorithmic}
\end{algorithm}

\noindent
Unlike directional mergesort (Algorithm 1), Algorithm 5 uses only standard forward merging (\textsc{MergeFw}) rather than dynamic directional selection. Consequently, while Case 1 (Fully Contained) nodes bypass the merge entirely via line 6 and cost $0$ comparisons, Case 2 (Boundary) nodes cannot leverage the directional savings established in Section~3.1. 

To account for this, we adopt the flat amortized charging scheme introduced in Section~3.2: every element involved in an active merge loop is charged 1 comparison per level regardless of local savings, while Case 1 nodes and Case 4 nodes incur 0 cost as a whole.

\subsection{Comparison Complexity}

As established in Lemma~4, the global overhead of skip checks across the $n-1$ internal nodes of the binary merging tree contributes exactly $n-1$ comparisons. We isolate this overhead first and analyze the per-run merge comparison charges over the two tree depth regimes using threshold depth $t+1 = \lceil \lg(n/r) \rceil+1$ for a run of length $r$.

\begin{lemma}[Upper-Level Comparison Charge]
For a non-decreasing run of length $r$, the total merge comparison charge incurred across upper levels ($h < t+1$) is bounded by $r(t+1)$.
\end{lemma}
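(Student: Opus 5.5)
The plan is to mirror the proof of Lemma~9 (and Lemma~5), replacing the move charge of $1.5$ per element with the flat comparison charge of $1$ per element per level. First, invoke Fact~3: at every level $h$ the nodes partition the array into disjoint contiguous subarrays. Hence each of the $r$ run elements lies in exactly one node per level.

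Next, we bound the charge at a single level $h < t+1$. We work under the flat amortized scheme of Appendix~B. If the node containing a given run element performs an active merge, that element is charged exactly $1$ comparison. If the merge is skipped (as for Case~1 nodes, which always pass the skip check), it is charged $0$. We must also confirm that the flat charge of $1$ per element genuinely dominates the real cost of an active merge. This holds because \textsc{MergeFw} on a node $v$ makes at most $|v|-1 < |v|$ comparisons. Distributing this cost at most one per element of $v$ therefore covers it, and the run's share at that level is at most the number of run elements in $v$. Summing over the disjoint nodes at level $h$, the run receives a total charge of at most $r$ at that level.

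Finally, we sum over the $t+1$ upper levels $h = 0, 1, \dots, t$. This gives a total of at most $r(t+1)$. We make no attempt to exploit Case~1 savings here, since, as in Lemma~5, nodes at these levels are too large for such savings to be guaranteed.

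The only step needing any care is the justification that the flat per-element charge is a valid upper bound on the actual merge comparisons, so that local per-node costs can be reassigned to elements. This follows immediately from the merge performing fewer than $|v|$ comparisons. I therefore expect no real obstacle; the lemma is essentially a direct counting argument.
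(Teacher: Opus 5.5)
Your proposal is correct and matches the paper's proof. The paper likewise charges each of the $r$ run elements at most $1$ comparison per level and sums over the $t+1$ levels $h=0,\dots,t$ to get $r(t+1)$; your additional check that \textsc{MergeFw} on $v$ makes at most $|v|-1$ comparisons, so the flat charge dominates the true cost, is a justification the paper leaves implicit in its accounting scheme (and ``at most one node per level'' already suffices where you say ``exactly one'').
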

\begin{proof}
    Charging each of the $r$ run elements at most 1 comparison per level across $t+1$ levels yields an upper-level cost bounded by $r(t+1)$.
\end{proof}

\begin{lemma}[Lower-Level Comparison Charge]
For a non-decreasing run of length $r$, the total merge comparison charge incurred across lower levels ($h \ge t+1$) is bounded by $2(n/2^t)$.
\end{lemma}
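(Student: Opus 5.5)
The plan is to mirror the proof of Lemma~\ref{lemma:moves_lower} almost verbatim, replacing the move charge of $1.5$ per element with a comparison charge of $1$ per element.

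First, fix a level $h \ge t+1 > t$. As shown in the proof of Lemma~6, every node at such a level has size at most $\lceil n/2^h\rceil \le r$, so no Case~3 node exists. Case~1 nodes are fully contained in the non-decreasing run, so their skip check succeeds and they incur $0$ merge comparisons. Case~4 nodes contain no run elements and are charged nothing. Hence the only chargeable nodes at level $h$ are the at most two Case~2 boundary nodes, one at each end of the contiguous run interval.

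Next, bound the charge at each boundary node. Such a node contains at least one non-run element, so by Fact~2 it holds at most $\lceil n/2^h\rceil - 1 \le n/2^h$ run elements. Under the flat scheme each of these is charged at most $1$ comparison, since an element is charged only when it wins a comparison and so is charged at most once per merge. Summing over the two boundary nodes, level $h$ contributes at most $2(n/2^h)$.

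Finally, sum the geometric series over $h \ge t+1$:
\begin{equation*}
\sum_{h=t+1}^{\infty} 2(n/2^h) = 2\cdot\frac{n/2^{t+1}}{1-1/2} = 2(n/2^t).
\end{equation*}

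The only delicate point is justifying that Case~1 nodes genuinely cost zero in Algorithm~5, which has no purity flags. This holds because, by induction, a fully contained subarray is never modified: every merge beneath it is skipped, so its elements stay in their original non-decreasing order and the check $A[m-1]\le A[m]$ passes. This is a routine induction, so I do not expect a real obstacle. Note that no directional savings are needed here, because shifting the threshold to $t+1$ already makes the boundary sizes geometric.
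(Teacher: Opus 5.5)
Your proposal is correct and follows the paper's proof essentially step for step. Both arguments rule out Case~3 at levels $h\ge t+1$, bound each of the at most two boundary nodes by $\lceil n/2^h\rceil-1\le n/2^h$ run elements charged at most once, and sum the geometric series to $2(n/2^t)$. Your induction showing that fully contained subarrays are never modified, so Case~1 skip checks pass in Algorithm~5, justifies a point the paper only asserts, but the argument is otherwise the same.
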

\begin{proof}
At levels $h \ge t+1>t$, node sizes are bounded by $r$, so Case 3 nodes cannot exist. Furthermore, Case 1 and Case 4 nodes contribute 0 charge to the run. Thus, non-zero charges arise solely from at most two Case 2 boundary nodes per level. 

By Fact~2, a boundary node $v$ at level $h$ has size at most $\lceil n/2^h \rceil$. Since the run cannot fully contain a boundary node, $v$ contains at most $\lceil n/2^h \rceil - 1 \le n/2^h$ run elements. Across both boundary nodes, at most $2(n/2^h)$ run elements are charged 1 comparison at level $h$. Summing this geometric series across all lower levels $h \ge t+1$ yields:
\begin{equation*} \sum_{h=t+1}^{\lg n} 2(n/2^h) \le 2 \cdot \frac{n/2^{t+1}}{1 - 1/2} = 4(n/2^{t+1})=2(n/2^t) \qedhere \end{equation*}
\end{proof}

\begin{theorem}[Comparison Complexity of Simple Adaptive Mergesort]
An array of size $n$ containing $r_\#$ non-decreasing runs of lengths $r_1, r_2, \dots, r_{r_\#}$ is sorted by Simple Adaptive Mergesort using at most $nH + 4n$ comparisons, where $H = \sum_{i=1}^{r_\#} (r_i/n) \lg(n/r_i)$.
\end{theorem}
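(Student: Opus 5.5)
The plan mirrors the proofs of Theorem 8 and Theorem 11: add the two appendix lemmas to get a per-run merge charge, tighten it with the secant bound on $2^{-\gamma}$, sum over runs, and finally add back the skip checks.

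\textbf{Per-run bound.} Fix a non-decreasing run of length $r$ and set $t=\lceil \lg(n/r)\rceil$. The Upper-Level Comparison Charge lemma and the Lower-Level Comparison Charge lemma above together bound the merge comparisons charged to the run by
$$r(t+1)+2(n/2^t).$$
Writing $t=\lg(n/r)+\gamma$ with $0\le\gamma<1$ gives $n/2^t=r\,2^{-\gamma}$. The expression then becomes
$$r\lg(n/r)+r\left(1+\gamma+2^{1-\gamma}\right).$$
The secant bound already derived in the proof of Lemma 7, namely $2^{1-\gamma}+\gamma\le 2$ for $\gamma\in[0,1]$, bounds this per-run charge by $r\lg(n/r)+3r$.

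\textbf{Summing over runs.} Every comparison inside a merge loop is charged to exactly one element, the winner. Each element belongs to exactly one maximal non-decreasing run. Summing the per-run bound over all $r_\#$ runs, with $\sum r_i=n$, therefore gives
$$\sum_i r_i\lg(n/r_i)+3n=nH+3n$$
merge comparisons. Adding the $n-1$ skip-check comparisons from Lemma 4 yields $nH+4n-1\le nH+4n$.

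\textbf{Main obstacle.} The only delicate point is that the appendix's flat charge of one comparison per element per active merge really dominates the actual merge comparisons. This holds because the actual winner-charging scheme of Section 3.1 never charges an element more than one comparison per level. Case 1 nodes are skipped by the skip check and contribute $0$, since the children of such a node have been sorted in place and remain a contiguous sorted piece of the run. The only other case is a merge that is skipped fortuitously, which only lowers the true cost. With that noted, the rest is the arithmetic above.
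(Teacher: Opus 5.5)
Your proposal is correct and follows the paper's proof essentially line for line. You combine the two appendix lemmas into $r(t+1)+2(n/2^t)$, rewrite via $\gamma$, and use $2^{1-\gamma}+\gamma\le 2$ to get $r\lg(n/r)+3r$ per run; summing gives $nH+3n$, and adding the $n-1$ skip checks gives $nH+4n$. Your extra check that the flat one-comparison-per-element-per-active-merge charge dominates the true winner-charged cost is sound, and the paper leaves it implicit in its description of the charging scheme.
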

\begin{proof}
Combining upper and lower level comparison charges for a run of length $r$ yields:
$$r(t+1) + 2(n/2^t)$$
Letting $\gamma = \lceil \lg(n/r) \rceil - \lg(n/r) \in [0, 1)$, we rewrite this bound as:
$$r \left(\lg(n/r) + \gamma +1\right) + 2\left(r/2^\gamma\right) = r \lg(n/r) + r\left(1+\gamma + 2\left(2^{-\gamma}\right)\right)$$
Applying the secant line bound to the convex expression $2^{-\gamma}$ for $\gamma \in [0,1]$:
\begin{align*}
    2^{-\gamma} &\leq 1-(1/2)\gamma \\[0.25em]
    2(2^{-\gamma}) &\leq 2-\gamma \\[0.25em]
    2(2^{-\gamma})+\gamma &\leq 2
\end{align*}
Thus, the total comparison charge per run is at most $r \lg(n/r) + 3r$. Summing across all $r_\#$ runs:
$$\sum_{i=1}^{r_\#} \left( r_i \lg(n/r_i) + 3r_i \right) = nH + 3n$$
Finally, reintroducing the $n-1$ skip check comparisons gives:
\begin{equation*} nH + 3n + (n - 1) \le nH + 4n \qedhere \end{equation*}
\end{proof}

\subsection{Data Movement Complexity}

\begin{corollary}[Move Complexity of Simple Adaptive Mergesort]
    An array of size $n$ containing $r_\#$ non-decreasing runs of lengths $r_1, r_2, \dots, r_{r_\#}$ is sorted by Simple Adaptive Mergesort using at most $1.5nH + 5n$ moves, where $H = \sum_{i=1}^{r_\#} (r_i/n) \lg(n/r_i)$.    
\end{corollary}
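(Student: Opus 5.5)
The plan is to observe that the move analysis of Section~3.2 never used the directional merge choice, so it applies essentially verbatim to Simple Adaptive Mergesort. The algorithm still uses the half-buffered merge at every non-skipped internal node. At such a node $v$ it therefore makes at most $|v|+\lceil |v|/2\rceil \le 1.5|v|+1/2$ moves, and skipped nodes make $0$ moves. As in Section~3.2, I would first separate out the $+1/2$ overhead over the $n-1$ internal nodes (Lemma~4), for a total of at most $(n-1)/2$ moves. The remaining pool of $1.5|v|$ per active node is then distributed uniformly as $1.5$ moves per element.

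Next I would fix a non-decreasing run of length $r$ and use the same threshold $t+1$ with $t=\lceil \lg(n/r)\rceil$. At the upper levels $h<t+1$, each run element lies in exactly one node per level (Fact~3). This gives a charge of at most $1.5r(t+1)$, exactly as in Lemma~\ref{lemma:moves_upper}.

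At the lower levels $h\ge t+1$, I would reuse the structural facts from the proof of Lemma~6 and Lemma~\ref{lemma:moves_lower}. Case~3 nodes cannot exist. Case~1 nodes pass the skip check: every adjacent pair inside a run satisfies $A[m-1]\le A[m]$, and the children of a fully contained node are themselves untouched sorted pieces of the run, so they incur $0$ moves. Case~4 nodes charge nothing to the run. This leaves at most two Case~2 nodes per level, each holding at most $\lceil n/2^h\rceil-1\le n/2^h$ run elements. Summing $3(n/2^h)$ over $h\ge t+1$ gives at most $3(n/2^t)$. The only point needing care is that Case~1 skipping is independent of merge direction, which is exactly why Lemma~\ref{lemma:moves_lower} transfers without change.

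Finally, I would combine the two regimes as in Theorem~\ref{theorem:total_moves}. Writing $\gamma=t-\lg(n/r)\in[0,1)$ and applying the secant bound $3\cdot 2^{-\gamma}+1.5\gamma\le 3$ gives a per-run charge of at most $1.5r\lg(n/r)+4.5r$. Summing over all runs gives $1.5nH+4.5n$, and adding the $(n-1)/2$ overhead gives at most $1.5nH+5n$. I expect no real obstacle here. The only thing to verify is that no step of Section~3.2 relied on \textsc{MergeBw} or on purity flags, and a direct check confirms this: the move charge was a uniform $1.5$ per element and never used local directional savings.
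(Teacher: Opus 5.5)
Your proposal is correct and matches the paper's argument. The paper's proof simply notes that the Section~3.2 accounting charges a flat $1.5$ moves per element and never uses directional merging savings, so Theorem~11 carries over verbatim. You reach the same conclusion by re-checking Lemmas~9 and~10 and Theorem~11 step by step, including the correct observation that Case~1 skipping does not depend on merge direction.
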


\begin{proof}
    As Theorem 11 already utilizes an amortized accounting scheme that charges a flat $1.5$ moves per element oblivious to potential localized directional merging savings, the bound established in Theorem 11 holds for the simplified version.
\end{proof}

\end{document}